\documentclass[10pt,final,doublecolumn]{IEEEtran}
\hyphenation{optical networks semi-conductor IEEEtran}
\usepackage{amsmath}
\usepackage{amssymb}
\usepackage{amsfonts} 
\usepackage[short]{optidef}
\allowdisplaybreaks[4]
\usepackage[subtle]{savetrees}

\usepackage{amsthm}
\usepackage{enumitem}

\usepackage{graphicx}
\usepackage{float}  
\usepackage{tikz,lipsum}
\usepackage[labelformat=simple]{subcaption}
\DeclareCaptionLabelSeparator{periodspace}{.\quad}
\captionsetup{font=footnotesize,labelsep=periodspace,singlelinecheck=false}
\captionsetup[sub]{font=footnotesize,singlelinecheck=true}

\usepackage{tabularx}
\usepackage{comment}
\usepackage{caption}
\usepackage{subcaption}

\usepackage{pdfpages}   
\usepackage{epstopdf}
\usepackage{bbm}
\usepackage{cite}
\usepackage{url} 

\usepackage[ruled,vlined]{algorithm2e}  
\setlength{\AlCapSkip}{-1em}

\usepackage[normalem]{ulem}

\usepackage{mathtools}
\usepackage{setspace}

\def\nb0{{\mathbf{0}}}
\def\nb1{{\mathbf{1}}}






\def\nbbC{{\mathbb{C}}}

\def\nbbR{{\mathbb{R}}}

\def\nbbZ{{\mathbb{Z}}}





\newtheorem{nrem}{Remark}
\newtheorem{prop}{Proposition}


%










\usepackage{multirow}
\usepackage{lipsum}

\begin{document}
\title{\date{} Full-Space Wireless Sensing \\ Enabled by Multi-Sector Intelligent Surfaces}
\author{Yumeng Zhang,~\IEEEmembership{Student Member,~IEEE},~Xiaodan Shao,~\IEEEmembership{Member,~IEEE},~ Hongyu Li,~\IEEEmembership{Student Member,~IEEE},~\\Bruno Clerckx,~\IEEEmembership{Fellow,~IEEE},~Rui Zhang,~\IEEEmembership{Fellow,~IEEE}
\thanks{This work has been partially supported by UKRI grant EP/Y004086/1, EP/X040569/1, EP/Y037197/1, EP/X04047X/1, EP/Y037243/1.}
\thanks{Yumeng Zhang, Hongyu Li and Bruno Clerckx are with the Department of Electrical and Electronic Engineering, Imperial College London, London SW7 2AZ, U.K. (e-mail:
\{yumeng.zhang19,c.li21, b.clerckx\}@imperial.ac.uk).}
\thanks{X. Shao is with the Institute for Digital Communications, Friedrich-Alexander-University Erlangen-Nurnberg (FAU), 91054
Erlangen, Germany (email:xiaodan.shao@fau.de).}
\thanks{R. Zhang is with School of Science and Engineering, Shenzhen Research Institute of Big Data, The Chinese University of Hong Kong, Shenzhen, Guangdong 518172, China (e-mail: rzhang@cuhk.edu.cn). He is also with the Department of Electrical and Computer Engineering, National University of Singapore, Singapore 117583 (e-mail: elezhang@nus.edu.sg).
}}
\maketitle

\begin{abstract}
The multi-sector intelligent surface (IS), benefiting from a smarter wave manipulation capability, has been shown to enhance channel gain and offer full-space coverage in communications. However, the benefits of multi-sector IS in wireless sensing remain unexplored. 
  {This paper introduces the application of multi-sector IS for wireless sensing/localization. Specifically, we propose a new self-sensing system, where an active source controller uses the multi-sector IS geometry to reflect/scatter the emitted signals towards the entire space, thereby achieving full-space coverage for wireless sensing. Additionally, dedicated sensors are installed aligned with the IS elements at each sector, which collect echo signals from the target and cooperate to sense the target angle.} In this context, we develop a maximum likelihood estimator of the target angle for the proposed multi-sector IS self-sensing system, along with the corresponding theoretical limits defined by the Cramér-Rao Bound.  The analysis reveals that the advantages of the multi-sector IS self-sensing system stem from two aspects: enhancing the probing power on targets (thereby improving power efficiency) and increasing the rate of target angle (thereby enhancing the transceiver's sensitivity to target angles). Finally, our analysis and simulations confirm that the multi-sector IS self-sensing system, particularly the $4$-sector architecture, achieves full-space sensing capability beyond the single-sector IS configuration.  Furthermore, similarly to communications, {employing directive antenna patterns on each sector's IS elements and sensors} significantly enhances sensing capabilities. This enhancement originates from both aspects of improved power efficiency and target angle sensitivity, with the former also being observed in communications while the latter being unique in sensing. 
\end{abstract}

\begin{IEEEkeywords}
Multi-sector intelligent surfaces, full-space sensing,  Cramér-Rao Bound (CRB).
\end{IEEEkeywords}

\section{Introduction}
The future wireless networks are expected to hold an increasing number of high-demand applications, such as autonomous driving and the Internet of Things \cite{letaief2019roadmap,hassanien2019dual,tataria20216g}. This growth poses greater challenges to spectrum resources and service qualities. One promising solution is the emerging technique of integrated sensing and communications (ISAC) \cite{liu2022integrated,cui2021integrating,ma2020joint}. 
ISAC enhances the spectrum utilization efficiency by realizing dual functions (sensing and communications) with shared hardware, platform, and radio resources, thereby reducing costs and optimizing resource use \cite{liu2022survey,ma2020joint}. Furthermore, exploiting the synergy between the dual functions can enhance the overall performance of ISAC, i.e., sensing capabilities enhance environmental awareness for better communication strategies, which in turn support ultra-high data rates, reliability, and ultra-low latency for communications with reduced complexity \cite{zhang2022time}.  {However, in practical scenarios with complex radio propagation environments, the performance of ISAC may significantly degrade when transmission links are obstructed by obstacles.}

{One promising solution is employing the advanced technique of intelligent surfaces (ISs), which can enhance both sensing and communication performance by reconfiguring radio environments \cite{kaina2014shaping}.} Specifically, ISs adjust the phases and/or amplitudes of the impinging signals so that the reflected signals work constructively, or build a virtual line-of-sight (LoS) path in the presence of physical obstacles \cite{liaskos2018new,wu2021intelligent,di2020smart}.  {While there has been extensive research investigating IS-enhanced wireless communications \cite{guo2020weighted,wu2019intelligent}, only a few prior works have studied on IS-enhanced sensing \cite{elzanaty2021reconfigurable,wang2023target,shao2022target} or IS-enhanced ISAC \cite{meng2023ris,yu2022location}. Among these works, one notable research is in \cite{shao2022target}, where a new type of IS-aided sensing system, called IS self-sensing, is proposed. In IS self-sensing, an active source controller serves as a transmitter to send well-adjusted probing signals towards the IS elements. This allows the illuminated IS elements to autonomously radiate (via the IS controller) and receive (via sensors) sensing signals for target localization, eliminating the dependence on the sensing signals from a dedicated base station (BS). }

{The limitation of the aforementioned IS self-sensing system is its serving range only covers the half-space facing the IS, since the incident signals are purely reflected to the same side of IS.}  However, in practice, targets can be located anywhere in the full-space. Therefore, it is crucial to leverage multiple ISs to cover the full-space and collaborate in aiding target parameter estimation.   {To achieve full-space coverage, a notable effort has been made in communications, where a multi-sector IS geometry is proposed \cite{li2023beyond}, as depicted in Fig. \ref{fig:ISs_configuration}. Therein, $L~(L\geq 2)$ antenna arrays of IS elements are arranged along the sides of a uniform prism. The $L$-sector IS achieves full-space coverage when each of the IS is activated for radiation.} \cite{li2023beyond} analytically shows that the multi-sector IS, besides the advantages of full-space coverage, also enhances the received power when employing directive antenna patterns, which boosts the communication performance further \cite{tang2020wireless,li2023beyond,shao20246d_DIS,shao20246d}.

\begin{figure}
    \centering
    \includegraphics[width=3.0in]{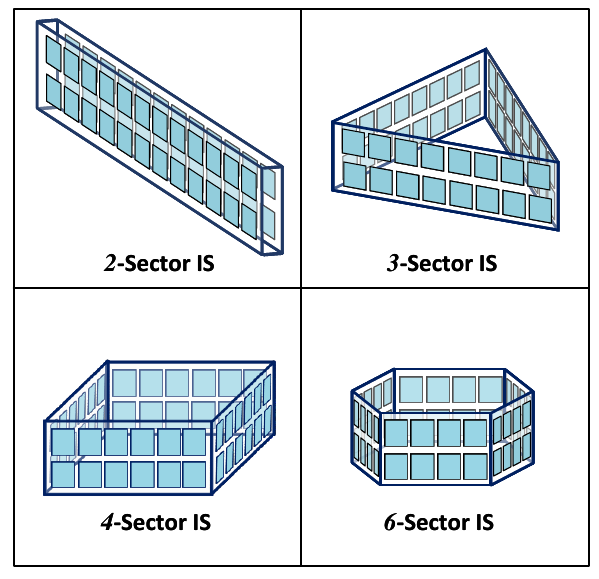}
    \caption{Examples of multi-sector IS configurations, for $L=2,~3,~4,~6$ with $L$ being the number of ISs.}
    \label{fig:ISs_configuration}
\end{figure}

While the advantages of multi-sector IS have been studied in communications \cite{li2023beyond}, its benefits in wireless sensing remain unexplored.   {Apart from the power benefit, the exploration of multi-sector IS for wireless sensing is crucial because the multi-sector IS provides flexible geometry configurations (i.e., different $L$ in Fig. \ref{fig:ISs_configuration}), which plays a vital role in the accuracy of target angle estimation \cite{karimi1996manifold,6193133}.} Indeed, the importance of transmitters'/receivers' geometry in wireless sensing has been emphasized in the traditional active multiple input multiple output/phased-array radar \cite{6193133,vasanelli2017optimization}. {Specifically, the uniform planar array radar (i.e., corresponding to the conventional IS architecture in \cite{shao2022target}) achieves high precision in sensing a target that is directly facing the antenna array, but suffers low precision when the target is in line with the extended plane of the antenna array \cite{karimi1996manifold}}. In contrast, a circular array radar offers uniform sensing performance across the full-space \cite{ioannides2005uniform}. Therefore, determining the optimal geometry (i.e., the optimal $L$) of multi-sector IS to balance communications, sensing, and implementation costs presents a compelling area for further research in ISAC. 

Towards that, this paper aims to investigate the optimal multi-sector IS geometry for wireless sensing with full-space coverage as a solid foundation for the future extension to ISAC. The main contributions of this paper are summarized as follows.

\begin{figure}
\centering\includegraphics[width=3.0in]{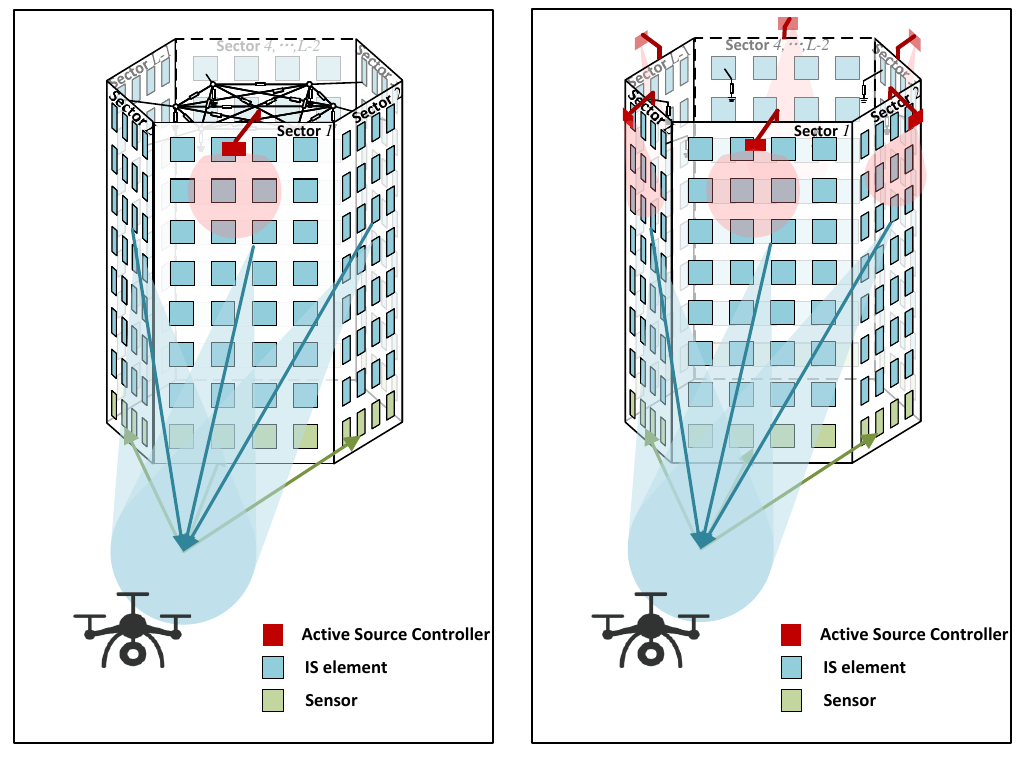}
    \caption{Two implementations of the proposed multi-sector IS self-sensing system, namely,  multi-sector beyond-diagonal IS (left), and multi-sector conventional IS (right).}
    \label{fig1_2_structure}
\end{figure}

\begin{enumerate}
    \item   {We propose a novel multi-sector IS self-sensing system for full-space coverage sensing. Specifically, we have two implementations of the multi-sector IS self-sensing system, as shown in Fig. \ref{fig1_2_structure}, where each sector is installed with dedicated IS elements (for signal radiation) and sensors (for signal collection). In the first case (Fig. \ref{fig1_2_structure}, left), an active source controller is installed on one sector and probes signals towards that sector. In addition, the IS elements across the $L~(L\geq 2)$ sectors are connected through a reconfigurable impedance network. The network connection enables the probing signal to be simultaneously reflected by the sector with the active source controller and scattered through the remaining sectors, which facilitates full-space radiation. In the second case  (Fig. \ref{fig1_2_structure}, right), $L$ active source controllers are installed, with one active source controller per sector. In this scenario, the IS elements at each sector are independently illuminated by their aligned active source controller and reflect the impinging signals for radiation.   For both implementations in Fig. \ref{fig1_2_structure}, the implemented sensors across the $L$ sectors work collaboratively to estimate the target angle based on their collected target echo signals.} 
 
    \item   {A joint  {maximum-likelihood (ML)} estimator of the target angle is derived for the proposed multi-sector IS self-sensing system. The Cramér-Rao bound (CRB) is also analytically derived as the lower bound of the empirical mean squared error (MSE) for estimating the target angle. By analyzing the CRB, we show that the estimation performance is mainly affected by two fundamental factors, i.e., \textit{the probing power on the target} and \textit{the squared rate of target angle}. Additionally, we specify the CRB and compare the MSEs for both half-space isotropic and half-space directive antenna patterns through a thorough analysis.} 
    
    \item {Simulation and analytic results show that the proposed multi-sector IS self-sensing system achieves full-space coverage, thus surpassing the capabilities of the single-sector IS configuration. Notably, when comparing among $L=2,~3,~4$, the configuration of $4$-sector IS achieves the best overall sensing performance, and particularly outperforms the conventional simultaneously transmitting and reflecting
reconfigurable surface (STARS) architecture in \cite{wang2023simultaneously} (i.e., the special case of the proposed multi-sector IS self-sensing system with $L=2$).  Moreover, employing half-space directive antenna patterns on each sector's IS elements and sensors significantly improves the sensing performance compared with that of the half-space isotropic antenna pattern.} 
\end{enumerate}

 The rest of this paper is organized as follows.   In Section \ref{sec2: sys_model}, we introduce the system and signal model of a multi-sector IS self-sensing system for full-space coverage. Section \ref{se3: analysis} derives the ML estimator and the CRB for estimating the target angle.  Section \ref{sec4: simulation} provides simulation results for verification. Section \ref{sec5:conclusion} concludes the paper.

\textit{Notation:}  
 Throughout the paper, matrices and vectors are respectively denoted in bold upper case and bold lower case. $\nbbR$, $\nbbC$ and $\nbbZ$ denote the set of real numbers, complex numbers and integers, respectively. For $x \in \nbbC$, $\mathfrak{R}\{x\}$, $\mathfrak{I}\{x\}$, $\measuredangle x$ and $|{x}|$ represents the real part, the imaginary part, the phase and the magnitude of ${x}$, respectively. For a vector (matrix) $\mathbf{x}$ ($\mathbf{X}$), $\|\mathbf{x}\|$ ($\|\mathbf{X}\|$) represents its $l_2$ (Frobenius) norm, $x_n$ (or $\left[\mathbf{x}\right]_n$) refers to the $n^{\mathrm{th}}$ entry of vector $\mathbf{x}$, and ${X}_{k,m}$ (or $\left[\mathbf{X}\right]_{k,m}$) is the $\left(k^{\mathrm{th}},~m^{\mathrm{th}}\right)$ entry of matrix $\mathbf{X}$. $\mathrm{diag}\{\mathbf{x}\}$($\mathrm{diag}\{\mathbf{X}\}$) represents a diagonal matrix (vector) with its $n^{\mathrm{th}}$ diagonal entry ($n^{\mathrm{th}}$ entry) equal to the $n^{\mathrm{th}}$ entry (diagonal entry) of vector $\mathbf{x}$ (matrix $\mathbf{X}$).   $\mathbf{I}_{N_t}$ denotes an  $N_t\times N_t$ identity matrix and $\mathbf{1}_K/\mathbf{0}_K$ is a column all-one/all-zero vector with dimension $K$. $(\cdot)^H$, $(\cdot)^*$ and $(\cdot)^T$ represent the Hermitian, the conjugate and the transpose operation respectively. $\mathbbm{1}(\cdot)$ is the indicator function. For a random variable $X$, $\mathbb{E}_X\left\lbrace f(X)\right\rbrace$ represents the expectation of function $f(X)$ averaged over $X$. {$\overset{N\uparrow}{\approx}$ and $\overset{N\uparrow}{=}$ respectively mean approximately equal and equal as $N$ grows large.}  \textit{All the positions/distances are normalized by half-wavelength if not being specified.}

\section{System Model}\label{sec2: sys_model}
This section first describes the two implementations of the multi-sector IS self-sensing system in Section \ref{sec2-1:configurations}. Later, Section \ref{sec2-2:Geo} sets up the Cartesian coordinate system (CCS) of the multi-sector IS self-sensing system and expresses the positions of the IS elements/sensors, by focusing on one implementation in Section \ref{sec2-1:configurations}. Section \ref{sec2-3:configurations} then derives the mathematical expressions of the radiated signal and its radar echoes in the multi-sector IS self-sensing system.

\subsection{Two implementations of the proposed multi-sector IS self-sensing system}\label{sec2-1:configurations}

  {As shown in Fig. \ref{fig1_2_structure}, we consider a multi-sector IS self-sensing system, where an $L$-sector IS configuration in \cite{li2023beyond} is combined with the self-sensing structure in \cite{shao2022target} to sense the angle of a point-target that can be located across the full-space. Specifically, we install a total number of $N_{\mathrm{I}}$ IS elements, with $M_{\mathrm{I}}=N_{\mathrm{I}}/L$ IS elements per sector. Moreover, to estimate the target angle, we install a total number of $N_{\mathrm{S}}$ sensors aligned with the IS elements, with $M_{\mathrm{S}}=N_{\mathrm{S}}/L$ sensors per sector to receive the echoed signals from the target. We make a joint estimation of the target angle after collecting all the echoed signals from all sensors. For simplicity, we focus on the target's azimuth angle estimation in this paper, while the results can be extended to estimate the elevation angle as well.}

  {Specifically, Fig. \ref{fig1_2_structure} depicts two implementations of the multi-sector IS self-sensing system, which are distinguished by the way of activating the IS elements across different sectors. In the first implementation as shown in Fig. \ref{fig1_2_structure} (left), we install only one active source controller, which probes signals towards the sector it faces (named by sector $1$) and only illuminates the IS elements at sector $1$. The impinging signal from the active source controller is then partially reflected by the IS elements at sector $1$ and partially transmitted to and scattered from the other sectors through the reconfigurable impedance networks. This allows the signal to be radiated from all the sectors and hence enables full-space sensing for the target angle estimation.  The interconnection network across sectors is realized by an emerging IS technique, i.e., beyond diagonal IS \footnote{The term, 'beyond diagonal', characterizes the mathematical expressions for the phase-shift matrix of IS elements when introducing reconfigurable impedance networks to connect between IS elements. Compared with the conventional IS where IS elements are independent of each other, introducing circuit connections turns the phase-shift matrix from a diagonal matrix to a non-diagonal matrix.} in \cite{shen2021modeling,li2022beyond,li2023reconfigurable}. Hence, we name the implementation in Fig. \ref{fig1_2_structure} (left)  by multi-sector beyond diagonal IS.  In contrast, in the second implementation as shown in Fig. \ref{fig1_2_structure} (right), each sector is equipped with its own active source controller, and hence IS elements across different sectors are illuminated independently by the probing signal from their corresponding active source controller. Then, all IS elements reflect their impinging signal for full-space radiation. In this context, the second implementation resembles deploying multiple conventional IS self-sensing structures (with the diagonal phase-shift matrix) in \cite{shao2022target} in a multi-sector geometry in \cite{li2023beyond}, hence named by multi-sector conventional IS as shown in Fig. \ref{fig1_2_structure} (right).}
 
  {The two implementations have pros and cons from the following perspectives. On one hand, the first implementation involves more circuit complexity due to interconnected ISs but requires only one active source controller, which simplifies the synchronization control. On the other hand, the second implementation features a less complicated circuit but requires multiple active source controllers and perfect synchronization for effective cooperation between sectors. In this paper, our assumptions for the target angle estimation hold for both implementations, and hence we focus on the first implementation without loss of generality. }

\subsection{Geometry constellation}\label{sec2-2:Geo}
\begin{figure*}[ht]
    \centering
\includegraphics[scale=0.7]{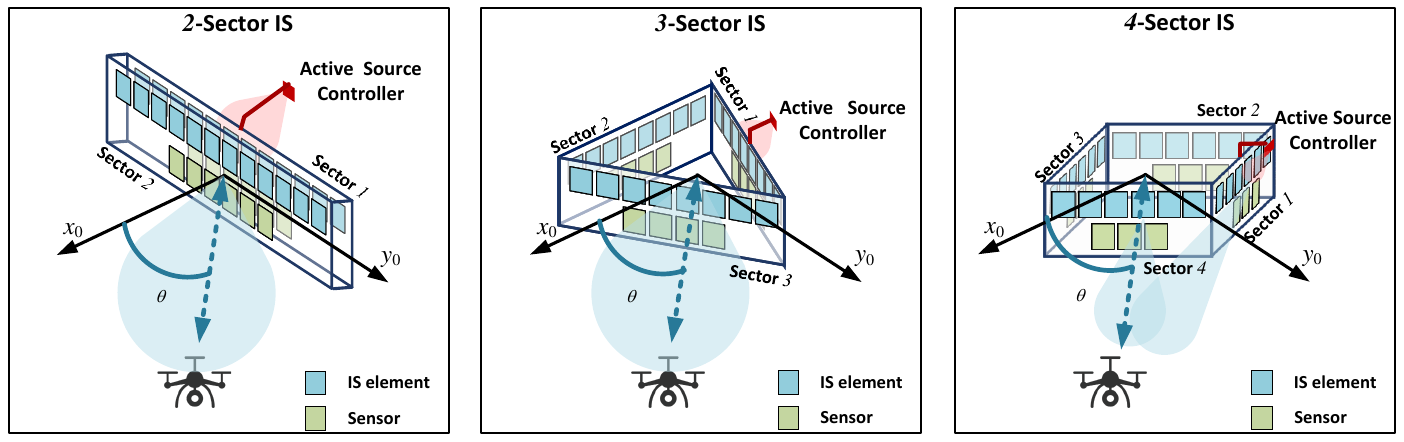}
    \caption{ {The model of the multi-sector IS self-sensing system under \textit{global CCS} for $L=2,~3,~4$ from left to right, assuming $N_\mathrm{I}=24$ and $N_\mathrm{S}=12$.}}
    \label{fig_system_model}
\end{figure*}

\begin{figure*}[ht]
    \centering    \includegraphics[scale=0.7]{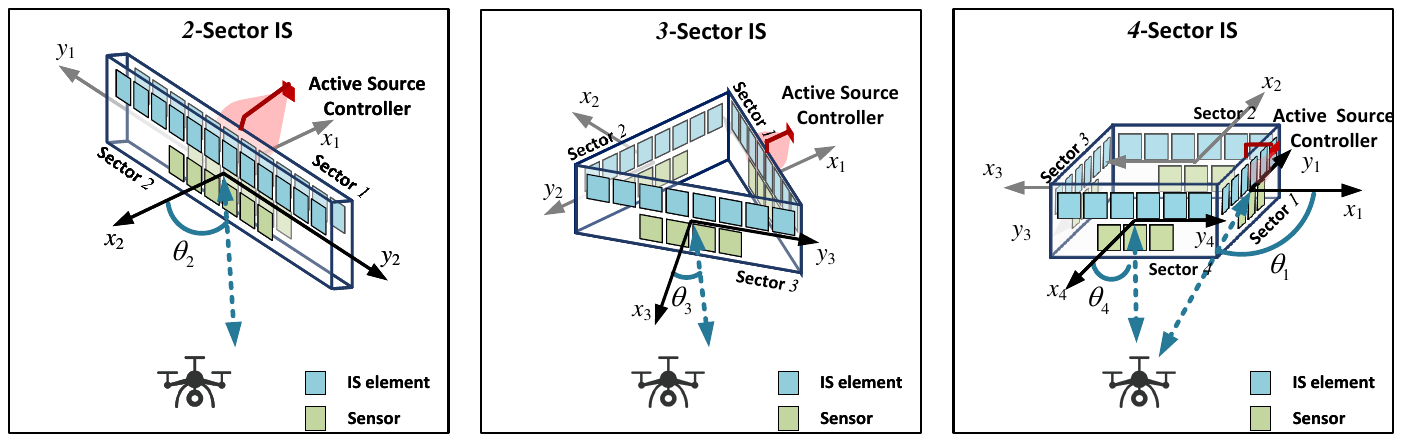}
    \caption{ {The model of the multi-sector IS self-sensing system under \textit{local CCS} for $L=2,~3,~4$ from left to right, assuming $N_\mathrm{I}=24$ and $N_\mathrm{S}=12$. Herein, the target angle to be estimated  in the G-CCS, $\theta$, becomes $\theta_l$ adaptive to each local CCS.}}
    \label{fig_system_model2}
\end{figure*}

In this sub-section, we express the positions of the target, IS elements and sensors in the multi-sector IS self-sensing system, with a particular focus on the configurations for $L=2,~3,~4$\footnote{ For practical considerations, the forthcoming comparisons between $L$ are limited to $L=2,~3,~4$, as expanding to a larger number of sectors yields significantly complicating hardware implementation.}, as shown in Fig. \ref{fig_system_model}. For simplicity,  we assume the IS elements/sensors at each sector are arranged in a uniform linear array (ULA) with half-wavelength spacing, and the IS elements and sensors are located at the same elevation level. In this context, we establish the global CCS (G-CCS) in Fig. \ref{fig_system_model}, denoted by $x_0-y_0$, with the multi-sector IS's center serving as the origin and the $y_0$-axis extending towards the edge of sector $1$. In the G-CCS, starting from sector $1$, the other sectors are named sequentially in an anti-clockwise direction. For clarification, we also establish local CCS (L-CCS) for each sector under each configuration, denoted by $x_l-y_l,~l=1,~2,~\cdots,~L$, with the sector center serving as the origin,  as shown in Fig. \ref{fig_system_model2}. Apart from the origin,  $x_l-y_l$ is established with its $y_l$-axis extending along the ULA of IS elements at the $l^{\text{th}}$ sector and its $x_l$-axis extending outward from the $l^{\text{th}}$ sector plane, as illustrated in Fig. \ref{fig_system_model2}.

First, denote the position of the unknown target  in the G-CCS $x_0-y_0$ as $\mathbf{p}_{T}\in\mathbb{R}^2$, which is given by 
\begin{align}
    \label{eq_target}
\mathbf{p}_\mathrm{T}=&\rho\mathbf{u}\left(\theta \right)=\rho[\cos(\theta),~\sin(\theta)]^T,
\end{align}
where $\theta$ denotes the target azimuth angle to be estimated and $\|\mathbf{p}_{T}\|=\rho$ denotes the distance between the target and the $x_0-y_0$'s origin. 

 {Then, we construct the positions of the IS elements/sensors at each sector in the G-CCS $x_0-y_0$, which can be converted from their positions in the L-CCS  $x_l-y_l$ by utilizing the position transformations between $x_l-y_l  $ and $x_0-y_0 $.  {The positions of the IS elements/sensors at the $l^{\text{th}}$ sector in $x_l-y_l$ are given by}
\begin{align}\label{eq_local_CCS_censor}
\mathbf{P}^{\mathrm{L}}_\mathrm{I/S}=&\left[\mathbf{p}^{\mathrm{L}}_{\mathrm{I/S},~x},~\mathbf{p}^{\mathrm{L}}_{\mathrm{I/S},~y}\right]^{T}\in\mathbb{R}^{2\times {M_{\mathrm{I/S}}}},
\end{align} 
where we have $\mathbf{p}^{\mathrm{L}}_{\mathrm{I/S},~x}=\mathbf{0}_{M_{\mathrm{I/S}}}\in\mathbb{R}^{{M_{\mathrm{I/S}}}}$, and  $\mathbf{p}^{\mathrm{L}}_{\mathrm{I/S},~y}=\left[-\frac{M_{\mathrm{I/S}}-1}{2},~\cdots,~-\frac{1}{2},~\frac{1}{2},~\cdots,~\frac{M_{\mathrm{I/S}}-1}{2}\right]^T\in\mathbb{R}^{{M_{\mathrm{I/S}}}}$.

Then, to build up the position transformations between $x_l-y_l$ and $x_0-y_0 $, we introduce the rotation matrix as follows,
\begin{subequations}\label{eq_rotation}
  \begin{align}
\mathbf{Q}_{l}=&\begin{bmatrix}\cos{\phi_l},&-\sin{\phi_l}\\\sin{\phi_l},&\cos{\phi_l} \end{bmatrix},\\   \mathbf{Q}^{\mathrm{L}}_{l}=&\begin{bmatrix}\cos{\phi_l},&\sin{\phi_l}\\-\sin{\phi_l},&\cos{\phi_l} \end{bmatrix},
\end{align}  
\end{subequations}
where $\mathbf{Q}_{l}$ and $\mathbf{Q}^{\mathrm{L}}_{l}$ are respectively the rotation matrix from $x_0-y_0$ to $x_l-y_l$ and  from $x_l-y_l$ to $x_0-y_0$. $\phi_l$ is the rotation angle from $x_0-y_0$ to $x_l-y_l$, i.e., $\phi_l=\pi/2+(2l-1)\pi/L$. 

Combining \eqref{eq_local_CCS_censor} and \eqref{eq_rotation}, the positions of the  IS elements/sensors at the $l^{\text{th}}$ sector in the G-CCS $x_0-y_0$ are expressed as
\begin{subequations}
    \begin{align}
\label{eq_global_sensor}
\mathbf{P}_{\mathrm{I/S},~l}=&{\mathbf{Q}^{\mathrm{L}}_{l}}^T\left(\mathbf{P}^{\mathrm{L}}_\mathrm{I/S}-\mathbf{p}^{\mathrm{L}}_{0,~l}  {\mathbf{1}^T}\right)\in\mathbb{R}^{2\times M_\mathrm{I/S}},\\\label{eq_global_sensor1}
\mbox{with}~
\mathbf{p}^{\mathrm{L}}_{0,~l}=&{\mathbf{Q}_{l}}^T\left(-\mathbf{p}_{\mathrm{c},~l}\right)=\left[-\frac{M_\mathrm{I/S}}{2\tan\left(\pi/L\right)},~0\right]^T\in\mathbb{R}^{2},\\\label{eq_global_sensor2}
\mathbf{p}_{\mathrm{c},~l}=&\frac{M}{2\tan{\left(\pi/L\right)}}\left[\cos{\phi_l},~\sin{\phi_l}\right]^T\in\mathbb{R}^{2},
\end{align}
\end{subequations}
where $\mathbf{p}^{\mathrm{L}}_{0,~l}$ denotes the position of $x_0-y_0$'s origin in $x_l-y_l$. $\mathbf{p}^{\mathrm{L}}_{0,~l}$ is obtained by the position conversion from $x_l-y_l$ to $x_0-y_0$, as expressed in \eqref
{eq_global_sensor1}. The position conversion therein also requires the position of $x_l-y_l$'s origin in $x_0-y_0$, which is denoted by $\mathbf{p}_{\mathrm{c},~l}$ in \eqref
{eq_global_sensor2}.  In \eqref
{eq_global_sensor2}, 
 ${2\tan\left(\pi/L\right)}/M$ with $M\triangleq\max \left(M_\mathrm{I},~M_\mathrm{S}\right)$ is the distance from $x_0-y_0$'s origin to $ \mathbf{p}_{\mathrm{c},~l}$.

\subsection{Signal model}\label{sec2-3:configurations}
We assume a narrow-band system, where no direct link exists between the active source controller and the target, as shown in Fig. \ref{fig_system_model}. Moreover, the active source controller consecutively sends probing signals over $Q$ snapshots directively towards IS elements at sector $1$. For each snapshot, the impinging signal is simultaneously reflected by the IS elements at sector $1$ and scattered through the IS elements at the remaining sectors with well-designed phase shifts. Then, the signal is radiated in the full-space from all sectors, and the echoed signals by the target are received by the sensors.  For target angle estimation, we collect the echoed signals from all sensors over the whole $Q$ snapshots and conduct a joint ML estimation. The whole transmission is modelled as follows in detail.

First, denote $s_q,~q=0,~\cdots,~Q-1,$ as the probing signal from the active source controller at snapshot $q$. For each snapshot $q$, the signal $s_q$ undergoes the channel from the active source controller to the IS elements at sector $1$, which is assumed to be known and is modeled as a far-field LoS channel. The channel of active source controller$\rightarrow$IS elements at sector $1$, denoted by $\mathbf{g}\in\mathbb{C}^{M_{\mathrm{I}}}$, can then be given by
\begin{align}
    \label{eq_channel_g}
   \mathbf{g}=& \alpha_\mathrm{g}\mathbf{a}_\mathrm{g},
\end{align}
where $\alpha_\mathrm{g}=\sqrt{\frac{\lambda^2G_\mathrm{T}
G_\mathrm{R}}{16\pi^2d^2_{\mathrm{CI}}}}\exp\left\lbrace \frac{j2d_{\mathrm{CI}}\pi}{\lambda}\right\rbrace$ denotes the complex-valued path gain of the active source controller$\rightarrow$IS elements at sector $1$  channel. In the above, $d_{\mathrm{CI}}$ represents the distance between the active source controller and the $1^{\mathrm{st}}$ sector, $\lambda$ is the carrier wavelength, $G_\mathrm{T}$ is the gain of the antenna at the active source controller, and $G_\mathrm{R}$ is the gain of the antenna at each IS element. Notice that in the following, we assume $G_\mathrm{T}G_\mathrm{R}=\beta /M_\mathrm{I}$ to reflect that the active source controller is directing only towards the $M_\mathrm{I}$ IS elements at sector $1$\footnote{The larger the value of  $M_\mathrm{I}$, the larger the antenna aperture of IS of sector $1$, and the weaker the antenna directivity (hence the antenna gain) of the active source controller.}. Without loss of generality, we choose proper $\beta$ and $d_{\mathrm{CI}}$ to make the path loss $\alpha_\mathrm{g}=\sqrt{1/{M_{\mathrm{I}}}}$. In addition, in \eqref{eq_channel_g}, $\mathbf{a}_\mathrm{g}=\exp\left\lbrace j\pi\mathbf{p}^{\mathrm{L}}_{\mathrm{I},~y}\sin\left(\zeta\right)\right\rbrace\in\mathbb{C}^{M_\mathrm{I}}$ is the steering vector of IS elements at sector $1$, with $\zeta$ being the angle between the active source controller and the sector $1$'s center.

In this context, the radiated signal from the IS elements at the $l^\text{th}$ sector at the snapshot $q$ is defined as 
\begin{align}\label{eq_tran_signal}
    \mathbf{x}_{l,q}=&\sqrt{P^{\mathrm{tr}}}\bar{\mathbf{\Phi}}_{l,q} \mathbf{g} s_q\in\mathbb{C}^{M_\mathrm{I}},
\end{align}
where  $P^{\mathrm{tr}}$ is the transmit power at the active controller. $\bar{\mathbf{\Phi}}_{l,q}$ is a diagonal matrix that characterizes {the IS elements' phase shift from the $1^\text{st}$ sector to the $l^\text{th}$ sector at time $q$}. {We construct $\bar{\mathbf{\Phi}}_{l,q}$ such that, after combining with channel $\mathbf{g}$, the emitted signal $\mathbf{x}_{l,q}$ from sector $l$ at time $q$ is coincident with the $(q \mod ~M_\mathrm{I})^\mathrm{th}$ column of the $M_\mathrm{I}-$DFT matrix\footnote{We make this choice because literature has demonstrated that a DFT codebook is optimal for target sensing when there is no prior knowledge of the target \cite{shao2022target}.}, i.e., $\bar{\mathbf{\Phi}}_{l,q}=\sqrt{\frac{1}{L}}\mathrm{diag}\left\lbrace \mathbf{f}_{q}\right\rbrace\mathrm{diag}\left\lbrace \mathbf{a}_\mathrm{g}^*\right\rbrace$. In this context, we satisfy the rule of lossless passive IS elements, i.e., $\sum_{l}\bar{\mathbf{\Phi}}_{l,q}^H\bar{\mathbf{\Phi}}_{l,q}=\mathbf{I}_{M_\mathrm{I}}$. Therein, $ \mathbf{f}_{q}$ is the $(q \mod ~M_\mathrm{I})^\mathrm{th}$ column of the $M_\mathrm{I}-$DFT matrix.  
In this paper, we set  $Q=N_\mathrm{I}$ for simplicity such that the overall codebook has a periodicity of $L$.}

 { In this context, the radiated signal $\mathbf{x}_{l,q}$ becomes a periodic DFT codebook in the following form,} 
\begin{align}  
     \label{eq_tran_signal2}
 \mathbf{x}_{l,q}=&   \sqrt{\frac{P^{\mathrm{tr}}}{N_{\mathrm{I}}}}\mathbf{f}_{q},\\
   \mbox{with}~\left[\mathbf{f}_{q}\right]_m=&  \exp\left\lbrace -\frac{2\pi mq}{M_{\mathrm{I}}}\right\rbrace,~ m=0,~\cdots,~M_{\mathrm{I}}-1,
\end{align}
where  $s_q$ is assumed to be constant, e.g., $s_q=1$.  In this context, the total power of the whole codebook is $\sum_{l,q}\|\mathbf{x}_{l,q}\|^2=QP^{\mathrm{tr}}$.

Consequently, the echoed signal from the target at the sensors at the $l^\text{th}$ sector is given by (assume the target exists in the far-field of the sector)
\begin{subequations}\label{eq_re_signal2}
   \begin{align}
     \mathbf{Y}_{l}=&\alpha 
 F_{\mathrm{S}}\left(\theta,~l \right)\mathbf{a}_l\left(\theta\right)\sum_{l'=1}^{L} F_{\mathrm{I}}\left(\theta,~l' \right)\mathbf{b}_{l'}\left(\theta \right)^H\mathbf{X}_{l'}+\mathbf{Z}_l\in\mathbb{C}^{M_{\mathrm{S}}\times Q},\\\nonumber
     \mbox{with}~ \alpha=&\sqrt{\frac{64\lambda^2}{\pi^3 \|\mathbf{p}_\mathrm{T}\|^4}}\alpha_\mathrm{T},\\ &\text{{(the path loss normalized by isotropic antenna pattern),}}\\\nonumber
      \mathbf{a}_l\left(\theta \right)=&\exp\left\lbrace -j\pi \mathbf{P}^{T}_{\mathrm{S},~l}\left(-\mathbf{u}\left(\theta \right)\right)\right\rbrace=\exp\left\lbrace j\pi \mathbf{P}^{T}_{\mathrm{S},~l}\mathbf{u}\left(\theta \right)\right\rbrace\in\mathbb{C}^{M_{\mathrm{S}}},\\
        &\text{{(the receive steering vector at sector 
 $l$),}}\\\nonumber
        \mathbf{b}_l\left(\theta\right)=&\exp\left\lbrace -j\pi \mathbf{P}^{T}_{\mathrm{I},~l}\mathbf{u}\left(\theta \right)\right\rbrace\in\mathbb{C}^{M_{\mathrm{I}}},\\
        &\text{{(the transmit steering vector at sector 
 $l$),}}\\\nonumber
          \mathbf{X}_l=&\left[\mathbf{x}_{l,0},~ \mathbf{x}_{l,2},~ \cdots,~\mathbf{x}_{l,Q-1}\right]\in\mathbb{C}^{M_{\mathrm{I}}\times Q},\\
        &\text{{(the radiated signal over 
        $Q$ snapshots at sector 
 $l$),}}
\end{align} 
\end{subequations}
where   {$F_{\mathrm{S/I}}\left(\theta,~l\right)=\sqrt{G_{\mathrm{S/I}}\left(\theta_l\right)}$, with $G_{\mathrm{S/I}}\left(\theta_l\right)$ being  {the antenna gain (towards the target) of the IS elements/sensors at sector $l$}. Therein, $\theta_l$ is the target angle in the L-CCS $x_l-y_l  $. Note 
$G_{\mathrm{S/I}}\left(\theta_l\right)=0$ indicates that the target is not illuminated by sector $l$. }  Detailed discussions about $F_{\mathrm{S/I}}\left(\theta,~l\right)$ and $G_{\mathrm{S/I}}\left(\theta_l\right)$ can be found in Section \ref{sec_AP}.   {In the path loss $\alpha$, $\alpha_\mathrm{T}$ is the complex target scattering coefficient. } $\mathbf{Z}_l\in\mathbb{C}^{M_{\mathrm{S}}\times Q}$ is the additive white Gaussian noise (AWGN) at sensors at sector $l$, with noise power being $\sigma^2$. 

Combining $\mathbf{Y}_l$ for $l=1,~\cdots,~L$ into one matrix, we have  the following expressions,
\begin{subequations}\label{eq_y_vec0}
  \begin{align}
    \nonumber\mathbf{Y}=&\left[\mathbf{Y}_{1}^T,~ \mathbf{Y}_{2}^T,~ \cdots,~\mathbf{Y}_{L}^T\right]^T\\
    =&\alpha \mathbf{F}_{\mathrm{S}}\left(\theta\right)\mathbf{a}\left(\theta\right)\mathbf{b}\left(\theta\right)^H\mathbf{F}_{\mathrm{I}}\left(\theta\right)^T\mathbf{X}+\mathbf{Z}\\
    =&\alpha \mathbf{U}\left(\theta\right)+\mathbf{Z},\\
    \mbox{with}~   \mathbf{F}_{\mathrm{I/S}}\left(\theta\right)=&\mathrm{diag}\left\lbrace F_{\mathrm{I/S}}\left(\theta,~1\right),\cdots,~F_{\mathrm{I/S}}\left(\theta,~L\right)\right\rbrace\otimes \mathbf{I}_{M_{\mathrm{I/S}}},\\
    \mathbf{a}\left(\theta\right)=&\left[\mathbf{a}_1\left(\theta\right)^T,~ \mathbf{a}_2\left(\theta \right)^T,~ \cdots,~\mathbf{a}_L\left(\theta\right)^T\right]^T\in \mathbb{C}^{N_\mathrm{S}},\\
      \mathbf{b}\left(\theta\right)=&\left[\mathbf{b}_1\left(\theta\right)^T,~ \mathbf{b}_2\left(\theta \right)^T,~ \cdots,~\mathbf{b}_L\left(\theta\right)^T\right]^T\in \mathbb{C}^{N_\mathrm{I}},\\
      \mathbf{X}=&\left[\mathbf{X}_1^T,~ \mathbf{X}_2^T,~ \cdots,~\mathbf{X}_L^T\right]^T\in \mathbb{C}^{N_{\mathrm{I}}\times Q},\\
      \mathbf{Z}=&\left[\mathbf{Z}_1^T,~ \mathbf{Z}_2^T,~ \cdots,~\mathbf{Z}_L^T\right]^T\in \mathbb{C}^{N_{\mathrm{S}}\times Q}.
\end{align}  
\end{subequations}

 The vector form of \eqref{eq_y_vec0} can be written as  {(by setting $Q=N_{\mathrm{I}}$)} 
\begin{subequations}\label{eq_y_vec}
    \begin{align}
\mathbf{y}=&\mathrm{vec}\left\lbrace{\mathbf{Y}}\right\rbrace= \alpha\boldsymbol{\mu}\left(\theta\right)+\mathbf{z}\in\mathbb{C}^{N_{\mathrm{I}}N_{\mathrm{S}}},\\
\boldsymbol{\mu}\left(\theta\right)=&\mathrm{vec}\left\lbrace \mathbf{U}\left(\theta\right) \right\rbrace=\mathbf{X}^{T}\mathbf{F}_{\mathrm{I}}\left(\theta\right)\mathbf{b}\left(\theta\right)^*\otimes \mathbf{F}_{\mathrm{S}}\left(\theta\right)\mathbf{a}\left(\theta\right)\in\mathbb{C}^{N_{\mathrm{I}}N_{\mathrm{S}}},\\
    \mathbf{z}=&\mathrm{vec}\left\lbrace \mathbf{Z}\right\rbrace\in\mathbb{C}^{N_{\mathrm{I}}N_{\mathrm{S}}}.
\end{align}
\end{subequations}

\section{Performance Analysis}\label{se3: analysis}
{ With the derived compact signal model in \eqref{eq_y_vec}, in this section, we} first present the ML estimator of the signal and the analytical lower bound of the corresponding MSE, namely the CRB,  in Section \ref{sec3-1:CRB}. Given the CRB,  we reveal its two most fundamental components, which are, respectively, the probing power on targets and the squared rate of the target angle at sensors. Then, in Section \ref{sec3-2:e_r2}, we provide an in-depth analysis of these two components of the CRB.  Finally, in Section \ref{sec3-3: scaling_summary}, we pursue the performance comparison with half-space isotropic versus half-space directive antenna patterns.

\subsection{ML estimator and the corresponding CRB}\label{sec3-1:CRB}
In this sub-section, we first derive the ML estimator of the proposed multi-sector IS self-sensing system. Given the received signal in \eqref{eq_y_vec}, the parameters to be estimated are collected into $\boldsymbol{\zeta}=\left[ \theta,~{\alpha}\right]$, which has the following log-likelihood function,  \begin{align}\label{eq_LogL}
     L\left(\mathbf{y}; ~\boldsymbol{\zeta}\right)=-\left(N_\mathrm{I}N_\mathrm{S}\right)^2\log {(\pi\sigma^2)}-\frac{1}{\sigma^2}\|\mathbf{y}-\alpha\boldsymbol{\mu}\left( \theta\right)\|^2.
 \end{align}   

Hence, the ML estimator of $\boldsymbol{\zeta}$ is given by
\begin{subequations} \label{eq_MLE_app}
    \begin{align}
\left(\hat{\theta},~\hat{{\alpha} }\right)=&\underset{\theta,~\alpha}{\arg~ \min} \|\mathbf{y}-\alpha\boldsymbol{\mu}\left( \theta\right)\|^2\\
=&\underset{\theta,~\alpha}{\arg~ \min} -2\mathfrak{R}\left\lbrace\alpha\mathbf{y}^H\boldsymbol{\mu}\left( \theta\right)\right\rbrace+\|\alpha\boldsymbol{\mu}\left( \theta\right)\|^2,
\end{align}
\end{subequations}
which gives the optimal $\hat{{\alpha}}$ as following \cite{10304548}, 
\begin{align}\label{eq_MLE_alpha}
    \hat{\alpha }=\frac{ \boldsymbol{\mu}\left( \theta\right)^H\mathbf{y}}{\|\boldsymbol{\mu}\left( \theta\right)\|^2}.
\end{align}

Taking \eqref{eq_MLE_alpha} into \eqref{eq_MLE_app}, we obtain the optimal $\hat{\theta}$ in the form of
\begin{align}
    \label{eq_MLE}
    \hat{\theta}=\underset{\theta}{\arg~ \max} \frac{\big| \boldsymbol{\mu}\left( \theta\right)^H\mathbf{y}\big|^2}{\| \boldsymbol{\mu}\left( \theta\right)\|^2}.
\end{align}

Given the ML estimator in \eqref{eq_MLE}, the corresponding  Fisher information matrix is defined as \cite{kay1993fundamentals} 
\begin{subequations}
    \begin{align}
    \label{eq_FIM}
    \mathbb{F}\left(\boldsymbol{\zeta} \right)=&\frac{2}{\sigma^2}\mathfrak{R}\left\lbrace \frac{\partial \alpha^*\boldsymbol{\mu}^H\left(\theta \right)}{\boldsymbol{\zeta}}\frac{\partial \alpha\boldsymbol{\mu}\left(\theta \right)}{\boldsymbol{\zeta}}\right\rbrace\\
    =& \frac{2}{\sigma^2}\mathfrak{R}\left\lbrace \begin{bmatrix}
|\alpha|^2\dot{\boldsymbol{\mu}}\left({\theta}\right)^H\dot{\boldsymbol{\mu}}\left({\theta}\right),&\alpha^*\dot{\boldsymbol{\mu}}\left({\theta}\right)^H\boldsymbol{\mu}\left(\theta \right)\left[1,~j \right]\\
        \left[1,~-j \right]^T\alpha \boldsymbol{\mu}\left(\theta \right)^H\dot{\boldsymbol{\mu}}\left({\theta}\right),&\boldsymbol{\mu}\left(\theta \right)^H\boldsymbol{\mu}\left(\theta \right)\mathbf{I}_2,
    \end{bmatrix}\right\rbrace, 
\end{align}
with  $\dot{\boldsymbol{\mu}}\left({\theta}\right)\triangleq\frac{\partial \boldsymbol{\mu}\left(\theta \right)}{\partial \theta}$
\end{subequations}, which yields the CRB for $\theta$ as following \cite{10304548,kay1993fundamentals},
\begin{subequations} \label{eq_CRB_conclusion}
    \begin{align}
   &\mathrm{CRB}\left( \theta\right)=\left[\mathbb{F}^{-1}\left(\boldsymbol{\zeta} \right)\right]_{1,1}\\
   =&\frac{\sigma^2}{2|\alpha|^2}\left[\dot{\boldsymbol{\mu}}\left({\theta}\right)^{H}\dot{\boldsymbol{\mu}}\left({\theta}\right)-\frac{\dot{\boldsymbol{\mu}}\left({\theta}\right)^{H}  {\boldsymbol{\mu}}\left({\theta}\right){\boldsymbol{\mu}}\left({\theta}\right)^{H}\dot{\boldsymbol{\mu}}\left({\theta}\right)}{{\boldsymbol{\mu}}\left({\theta}\right)^{H}{\boldsymbol{\mu}}\left({\theta}\right)}\right]^{-1}\\\nonumber
   =&\frac{\sigma^2}{2|\alpha|^2}\Bigg[\dot{\mathbf{b}}^T\dot{\mathbf{b}}^*{\mathbf{a}^\mathrm{p}\left(\theta\right)}^H{\mathbf{a}^\mathrm{p}\left(\theta\right)}+{\mathbf{b}}^T{\mathbf{b}}^*{ \dot{\mathbf{a}}^\mathrm{p}\left(\theta\right)}^H{ \dot{\mathbf{a}}^\mathrm{p}\left(\theta\right)}-\\\label{eq_CRB_conclusion2}&\frac{\|\dot{\mathbf{b}}^T{\mathbf{b}}^*{\mathbf{a}^\mathrm{p}\left(\theta\right)}^H{\mathbf{a}^\mathrm{p}\left(\theta\right)}\|^2+\|{\mathbf{b}}^T{\mathbf{b}}^*{ \dot{\mathbf{a}}^\mathrm{p}\left(\theta\right)}^H{\mathbf{a}^\mathrm{p}\left(\theta\right)}\|^2}{\mathbf{b}^T{\mathbf{b}}^*{\mathbf{a}^\mathrm{p}\left(\theta\right)}^H{\mathbf{a}^\mathrm{p}\left(\theta\right)}}\Bigg]^{-1}\\\nonumber
   =&\frac{\sigma^2}{2|\alpha|^2}\left[\dot{\mathbf{b}}^T\dot{\mathbf{b}}^*{\mathbf{a}^\mathrm{p}\left(\theta\right)}^H{\mathbf{a}^\mathrm{p}\left(\theta\right)}\left(1-\frac{\|\dot{\mathbf{b}}^T\mathbf{b}^*\|}{\dot{\mathbf{b}}^T\dot{\mathbf{b}}^*{\mathbf{b}}^T\mathbf{b}^*}\right)\right.\\\label{eq_CRB_conclusion3}&\left.{\mathbf{b}}^T{\mathbf{b}}^*{ \dot{\mathbf{a}}^\mathrm{p}\left(\theta\right)}^H{ \dot{\mathbf{a}}^\mathrm{p}\left(\theta\right)}\left(1-\frac{\|{ \dot{\mathbf{a}}^\mathrm{p}\left(\theta\right)}^H{\mathbf{a}^\mathrm{p}\left(\theta\right)}\|}{{ \dot{\mathbf{a}}^\mathrm{p}\left(\theta\right)}^H{ \dot{\mathbf{a}}^\mathrm{p}\left(\theta\right)}{\mathbf{a}^\mathrm{p}\left(\theta\right)}^H{\mathbf{a}^\mathrm{p}\left(\theta\right)}}\right)\right]^{-1}\\\nonumber
   =&\frac{\sigma^2}{2|\alpha|^2}\left[{\mathbf{b}}^\mathrm{p}\left(\theta \right)^T \mathbf{R}_X{\mathbf{b}}^\mathrm{p}\left(\theta \right)^*\dot{\mathbf{a}}^\mathrm{p}\left(\theta \right)^H\dot{\mathbf{a}}^\mathrm{p}\left(\theta \right)\Gamma_\mathrm{I}\left(\theta \right)+\right. \\\label{eq_CRB_conclusion1}&\left. \dot{\mathbf{b}}^\mathrm{p}\left(\theta \right)^T\mathbf{R}_X\dot{\mathbf{b}}^\mathrm{p}\left(\theta \right)^*{\mathbf{a}}^\mathrm{p}\left(\theta \right)^H{\mathbf{a}}^\mathrm{p}\left(\theta \right)\Gamma_\mathrm{S}\left(\theta \right) \right]^{-1},\end{align}
with
\begin{align}\nonumber\label{eq_CRB_Gamma_APP}
    &\Gamma_\mathrm{I/S}(\theta) \overset{M_\mathrm{I/S}\uparrow}{=}1-\\
    &\frac{\left[\sum_{l}F_{\mathrm{I/S}}^2(\theta,~l)\sin\left(\theta-\phi_l \right)\right]^2}{\sum_{l}F_{\mathrm{I/S}}^2(\theta,~l)\sum_{l}F_{\mathrm{I/S}}^2(\theta,~l)\left[1+\cos^2\left(\theta-\phi_l \right)\tan^2(\pi/L)/3\right]},
\end{align}
\end{subequations}
 {where ${\mathbf{a}}^\mathrm{p}\left(\theta \right)\triangleq \mathbf{F}_{\mathrm{S}}\left(\theta \right){\mathbf{a}}\left(\theta \right),~
{\mathbf{b}}^\mathrm{p}\left(\theta \right)\triangleq\mathbf{F}_{\mathrm{I}}\left(\theta \right){\mathbf{b}}\left(\theta \right),~
\dot{\mathbf{a}}^\mathrm{p}\left(\theta \right)\triangleq\frac{\partial {\mathbf{a}}^\mathrm{p}\left(\theta \right)}{\partial\theta},~
\dot{\mathbf{b}}^\mathrm{p}\left(\theta \right)\triangleq\frac{\partial {\mathbf{b}}^\mathrm{p}\left(\theta \right)}{\partial\theta}$, and $
\mathbf{R}_X\triangleq\mathbf{X}^*\mathbf{X}^T$. Additionally, in \eqref{eq_CRB_conclusion2} and \eqref{eq_CRB_conclusion3}, we use the abbreviations $\mathbf{b}\triangleq\mathbf{X}^H\mathbf{b}^\mathrm{p}(\theta)$ and $\dot{\mathbf{b}}\triangleq\mathbf{X}^H\dot{\mathbf{b}}^\mathrm{p}(\theta)$ for brevity.} In \eqref{eq_CRB_Gamma_APP}, $\Gamma_\mathrm{I/S}(\theta)$ is only dependent on the antenna patterns, but regardless of $N_{\mathrm{I/S}}$ or $M_{\mathrm{I/S}}$. ${\mathbf{a}}^\mathrm{p}\left(\theta \right)$ and  ${\mathbf{b}}^\mathrm{p}\left(\theta \right)$ are the receive steering vector embedded with the gain of the antennas at sensors and the transmit steering vector embedded with the gain of the antennas at
IS elements, respectively. The proof of \eqref{eq_CRB_Gamma_APP} is shown in  Appendix \ref{appendix_CRB_largeM_app}.

\begin{prop}\label{prop1}
{When the IS elements and the sensors share the same architecture (e.g., $N_\mathrm{I}=N_\mathrm{S}\triangleq N$ and $F_\mathrm{I}\left(\theta \right)=F_\mathrm{S}\left(\theta \right)\triangleq F\left(\theta \right)$, hence $M_\mathrm{I}=M_\mathrm{S}\triangleq M$,  $\mathbf{F}_{\mathrm{I}}\left(\theta \right)=\mathbf{F}_{\mathrm{S}}\left(\theta \right)\triangleq \mathbf{F}\left(\theta \right)$ and $\Gamma_{\mathrm{I}}\left(\theta \right)=\Gamma_{\mathrm{S}}\left(\theta \right)\triangleq \Gamma\left(\theta \right)$), the CRB in \eqref{eq_CRB_conclusion} can be further simplified as following, which is inversely proportional to two physical properties of the multi-sector IS. }
\begin{subequations}\label{eq_CRB_appr}
   \begin{align}
     \mathrm{CRB}\left(\theta \right)\overset{M\uparrow}{=}& \frac{\sigma^2}{4|\alpha|^2\Gamma\left(\theta \right)}\left[e \left(\theta \right) r^2 \left(\theta \right) \right]^{-1},\\
    \label{eq_power} \mbox{with}~
    e\left(\theta \right)=&\|{\mathbf{b}}^\mathrm{p}\left(\theta \right)^{H}\mathbf{X}^{H}\|^2 ,
    \\   \label{eq_rate_angle}
     r\left(\theta \right)=&\|\dot{\mathbf{a}}^\mathrm{p}\left(\theta \right)\|,
    \end{align} 
\end{subequations}
where $e\left(\theta \right)$ is the probing power on target and $r\left(\theta \right)$ is the rate of target angle.
\end{prop}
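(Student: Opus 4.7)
My plan is to derive (16) from the already large-$M$ CRB (15) by combining the architectural symmetry assumed in the proposition with the block structure of the periodic DFT probing codebook, so that the two summands in the denominator of (15) collapse into two identical copies of $e(\theta)r^{2}(\theta)\Gamma(\theta)$.

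First, because $M_{\mathrm{I}}=M_{\mathrm{S}}=M$ the IS-element and sensor ULAs coincide in every sector, and since $\mathbf{F}_{\mathrm{I}}=\mathbf{F}_{\mathrm{S}}$ is real, the definitions of $\mathbf{a}^{\mathrm{p}}$ and $\mathbf{b}^{\mathrm{p}}$ just below (15) give $\mathbf{b}^{\mathrm{p}}(\theta)=\mathbf{a}^{\mathrm{p}}(\theta)^{*}$ and $\dot{\mathbf{b}}^{\mathrm{p}}(\theta)=\dot{\mathbf{a}}^{\mathrm{p}}(\theta)^{*}$, so $\|\mathbf{b}^{\mathrm{p}}\|^{2}=\|\mathbf{a}^{\mathrm{p}}\|^{2}$, $\|\dot{\mathbf{b}}^{\mathrm{p}}\|^{2}=r^{2}(\theta)$, and $\Gamma_{\mathrm{I}}=\Gamma_{\mathrm{S}}=\Gamma(\theta)$ throughout (15). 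Next, I would evaluate the quadratic forms in $\mathbf{R}_{X}=\mathbf{X}^{*}\mathbf{X}^{T}$. The codebook in (11) replicates the $M$-point DFT block $\mathbf{f}_{q}$ identically across sectors, and since $Q=N_{\mathrm{I}}=LM$ spans exactly $L$ complete DFT periods, a direct summation gives $\mathbf{R}_{X}=P^{\mathrm{tr}}(\mathbf{1}_{L}\mathbf{1}_{L}^{T})\otimes\mathbf{I}_{M}$. For any per-sector block-partitioned vector $\mathbf{v}=[\mathbf{v}_{1}^{T},\ldots,\mathbf{v}_{L}^{T}]^{T}$ this reduces to the key identity $\mathbf{v}^{T}\mathbf{R}_{X}\mathbf{v}^{*}=P^{\mathrm{tr}}\|\sum_{l=1}^{L}\mathbf{v}_{l}\|^{2}$. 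Setting $\mathbf{v}=\mathbf{b}^{\mathrm{p}}$ identifies $\mathbf{b}^{\mathrm{p}T}\mathbf{R}_{X}\mathbf{b}^{\mathrm{p}*}=\|\mathbf{X}^{H}\mathbf{b}^{\mathrm{p}}\|^{2}=e(\theta)$, so the first summand in the denominator of (15) is already exactly $e(\theta)\,r^{2}(\theta)\,\Gamma(\theta)$.

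Third, the second summand $P^{\mathrm{tr}}\|\sum_{l}\dot{\mathbf{b}}^{\mathrm{p}}_{l}\|^{2}\cdot\|\mathbf{a}^{\mathrm{p}}\|^{2}\,\Gamma(\theta)$ is handled by a diagonal-dominance argument in $M$. Each entry of $\dot{\mathbf{b}}^{\mathrm{p}}_{l}$ is a polynomial of degree at most $1$ in the element index $\tilde m\in\{-(M-1)/2,\ldots,(M-1)/2\}$ times a plane-wave exponential, so the diagonal $\sum_{l}\|\dot{\mathbf{b}}^{\mathrm{p}}_{l}\|^{2}=r^{2}(\theta)$ grows as $\Theta(M^{3})$; each cross term $\dot{\mathbf{b}}^{\mathrm{p}H}_{l'}\dot{\mathbf{b}}^{\mathrm{p}}_{l}$ with $l\neq l'$ is a finite sum of the form $\sum_{\tilde m}\tilde m^{k}e^{j\pi\tilde m\,\Delta_{l l'}(\theta)}$, $k\le 2$, with non-zero detuning $\Delta_{l l'}(\theta)$ fixed by the relative sector orientation, and standard Dirichlet-kernel derivatives bound such sums by $O(M^{2})$. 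Hence $\|\sum_{l}\dot{\mathbf{b}}^{\mathrm{p}}_{l}\|^{2}\overset{M\uparrow}{=}r^{2}(\theta)$; the same estimate applied without polynomial weights gives $\|\sum_{l}\mathbf{b}^{\mathrm{p}}_{l}\|^{2}\overset{M\uparrow}{=}\|\mathbf{a}^{\mathrm{p}}\|^{2}$, and therefore $e(\theta)\overset{M\uparrow}{=}P^{\mathrm{tr}}\|\mathbf{a}^{\mathrm{p}}\|^{2}$. Substituting these back, the second summand also reduces to $e(\theta)r^{2}(\theta)\Gamma(\theta)$, so the denominator totals $2e(\theta)r^{2}(\theta)\Gamma(\theta)$ and (15) collapses to (16).

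The main obstacle I anticipate is making the diagonal-dominance step uniform in $\theta$: for generic look directions the Dirichlet bound is immediate, but near degenerate directions where $\Delta_{l l'}(\theta)\to 0$ for some $l\neq l'$---i.e., where the target aligns with the bisector of two sector normals---the cross sums become comparable in order to the diagonal. This is the regime that the ``$\overset{M\uparrow}{=}$'' notation in (16) tacitly excludes; a fully rigorous version would either restrict $\theta$ to lie outside a shrinking $O(M^{-1})$-neighborhood of each degenerate direction, or carry those cross contributions as an explicit higher-order correction term.
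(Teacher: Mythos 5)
Your proposal follows essentially the same route as the paper's Appendix B: use the architectural symmetry to get $\mathbf{b}^{\mathrm{p}}(\theta)=\mathbf{a}^{\mathrm{p}}(\theta)^{*}$ and $\Gamma_{\mathrm{I}}=\Gamma_{\mathrm{S}}$, identify $\mathbf{b}^{\mathrm{p}T}\mathbf{R}_X\mathbf{b}^{\mathrm{p}*}=e(\theta)$, and then show the two summands in the denominator of \eqref{eq_CRB_conclusion1} are asymptotically equal by killing the cross-sector inner products $\mathbf{a}_{l_1}^{\mathrm{p}}(\theta)^{H}\mathbf{a}_{l_2}^{\mathrm{p}}(\theta)$ (and their derivative analogues) with a Dirichlet-kernel bound, which yields the factor $4$. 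The one obstacle you flag --- uniformity in $\theta$ near directions where the detuning vanishes --- is not left tacit in the paper: the bound \eqref{eq_APPEN_beta2_d}, $|\beta_2|\geq\cos(\pi/L)\sin(\pi/L)$, follows from the half-space condition $F(\theta,l)=0$ for $\cos(\theta-\phi_l)<0$, which restricts the sum in \eqref{eq_APP_diff_sector_0} to pairs of sectors that both face the target and thereby confines $(2\theta-\phi_{l_1}-\phi_{l_2})/2$ to $[\pi/L,\pi-\pi/L]$, making the cross terms uniformly $O(1/M)$. Indeed, your degenerate direction is $\theta=(\phi_{l_1}+\phi_{l_2})/2+\pi/2$ (perpendicular to, not along, the bisector of the two sector normals), where $\sin(\theta-\phi_{l_1})=\sin(\theta-\phi_{l_2})$ forces $\cos(\theta-\phi_{l_1})=-\cos(\theta-\phi_{l_2})$, so at most one of the two sectors illuminates the target and the offending cross term is annihilated by a zero antenna-gain factor rather than needing to be excluded.
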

\begin{proof}
Please refer to Appendix B.
\end{proof}

\begin{nrem}\label{rem:2}{Both $e\left(\theta \right)$ in \eqref{eq_power} and $r^2\left(\theta \right)$ in \eqref{eq_rate_angle} have intuitive interpretations. The probing power on the target, $e\left(\theta \right)$, affects sensing performance since more power to be reflected by the target indicates a higher signal-to-noise ratio (SNR) for estimating $\theta$. Besides, $e\left(\theta \right)$ in \eqref{eq_power} is closely related to $\mathbf{X}$ and $F(\theta)$. On the other hand, the rate of target angle, $r\left(\theta \right)$, describes the sensitivity of the sensors against $\theta$, adhering to the sensor's geometry $\mathbf{P}_{\mathrm{S},~l}$ and $F(\theta)$, as will be shown in Section \ref{sec3-2:e_r2}. The higher the rate of target angle, the better we can distinguish between close target angles while being less affected by the noise \footnote{The rate of target angle determines the sharpness of the peak of the ambiguity function, which ideally is a Dirac function for radar sensing.}.} \end{nrem}

\begin{nrem}\label{rem:Gamma}
$\Gamma(\theta)$ in \eqref{eq_CRB_Gamma_APP} is independent of $N_{\mathrm{I/S}}$, $M_{\mathrm{I/S}}$, $\mathbf{X}$, and $P^{\mathrm{tr}}$. Moreover, for different configurations, $\Gamma(\theta)$ can also be shown to be strictly upper-bounded by $1$ with slight fluctuations across $\theta$ (as explained in Appendix \ref{app:Gamma}). Hence,  $\Gamma(\theta)$ will not be the focus of our analysis in the following discussion. 
\end{nrem}

In the following sub-section, we assume that the IS elements and sensors are symmetric, as assumed in Proposition \ref{prop1}. This assumption, for balancing generality and tractability, not only simplifies the analysis but also facilitates a clear understanding of the core components that directly influence the CRB. However, it is important to note that our ML estimator in   \eqref{eq_MLE}  and the CRB in   \eqref{eq_CRB_conclusion} can be adapted to arbitrary architectures and antenna patterns.

\subsection{General insights into  $e\left( \theta\right)$ and $r^2\left( \theta\right)$ as fundamental CRB components}\label{sec3-2:e_r2}
This sub-section provides analytical insights into the two components, $e\left( \theta\right)$ and $r^2\left( \theta\right)$, with arbitrary antenna patterns. First, $e\left( \theta\right)$ in \eqref{eq_power} can be further expressed as
\begin{subequations} \label{eq_power_insights}
    \begin{align}\label{eq_power_insights1}
      e\left(\theta \right) =&\mathbf{b}^\mathrm{p}\left(\theta\right)^{T}\mathbf{X}\left[\mathbf{b}^\mathrm{p}\left(\theta\right)^{T}\mathbf{X}\right]^{H}\\\label{eq_power_insights2}
      =&{P^{\mathrm{tr}}}\|\sum_l F\left(\theta,~l\right)\mathbf{b}_l\left(\theta\right) \|^2\\\label{eq_power_insights3}
      \overset{M \uparrow}{=}&{P^{\mathrm{tr}}}\sum_l {F\left(\theta,~l\right)}^2\|\mathbf{b}_l\left(\theta\right) \|^2\\
   =&P^{\mathrm{tr}}N\sum_{l}{F\left(\theta,~l\right)}^2/L,
\end{align}
\end{subequations}
where   \eqref{eq_power_insights1} comes from the definition of $e\left(\theta \right)$ in \eqref{eq_power},   \eqref{eq_power_insights2} comes from $\mathbf{X}$ defined in \eqref{eq_tran_signal2}, and   \eqref{eq_power_insights3} comes from $\mathbf{b}_{l_1}\left(\theta\right)^H\mathbf{b}_{l_2}\left(\theta\right)\overset{M\uparrow}{=}0$ for $l_1\neq l_2$, as demonstrated in \eqref{eq_APP_diff_sector_0} of Appendix \ref{appendix_CRB_app}.

Second, $r\left( \theta\right)$ in \eqref{eq_rate_angle} can be expressed as
\begin{subequations}\label{eq_rate_insight}
    \begin{align}
   \nonumber 
    &r^2\left(\theta \right)=\dot{\mathbf{a}}^{p}\left(\theta \right)^T\dot{\mathbf{a}}^{p}\left(\theta \right)^*\\
    =&\sum_l \left[\frac{\partial {F}\left( \theta,~l\right)\mathbf{a}_l\left( \theta\right)}{\partial \theta}\right]^H\frac{\partial {F}\left( \theta,~l\right)\mathbf{a}_l\left( \theta\right)}{\partial \theta}\\\nonumber {=}&
\sum_{l}\pi^2{F\left(\theta,~l \right) }^2\left( -\frac{N^3-NL^2}{12L^3}+\frac{N^3}{4L^3\tan^2\left( \pi/L\right)}\right)\sin^2\left(\theta-\phi_l \right)\\\label{eq_rate_insight2}&+\sum_{l}\pi^2{F\left(\theta,~l \right) }^2\frac{N^3-NL^2}{12L^3}+{\dot{F}\left(\theta,~l \right) }^2\frac{N}{L},
\end{align}
\end{subequations}
where   \eqref{eq_rate_insight2} comes from \eqref{eq_app_a_derivative_a_derivative} in Appendix \ref{appendix_CRB_largeM_app}.

Next, we provide a detailed analysis of \eqref{eq_power_insights} and \eqref{eq_rate_insight} with the specific configurations of $L=2,~3,~4$.

\subsubsection{For $L=2$} the coverage 
of each sector is orthogonal, and hence the target is sensed by only one sector. We assume the target is illuminated by sector $2$ as depicted in Fig. \ref{fig_system_model}. From \eqref{eq_power_insights} and \eqref{eq_rate_insight},  $e\left(\theta \right)$ and $ r^2\left(\theta \right)$ are rewritten as
\begin{subequations}\label{eq_L_2}
  \begin{align}\label{eq_L_2_e}
    e\left(\theta \right)|_{L=2}=&\frac{P^{\mathrm{tr}}F^2(\theta,~2)}{2}N, \\\label{eq_L_2_r}
    r^2\left(\theta \right)|_{L=2}\overset{N\uparrow}{\approx}&\frac{\pi^2F^2(\theta,~2)\cos^2\left(\theta\right)}{96}N^3.
\end{align}  
\end{subequations}
{From \eqref{eq_L_2_r}, we can deduce that  $r^2(\theta)$ equals to $0$ for $\theta=\pi/2$ or for $\theta=-\pi/2$ because of the term $\cos^2\left(\theta\right)$. This results in infinite CRB in \eqref{eq_CRB_appr}, i.e., the geometry of $L=2$ cannot provide accurate estimation when the target is around $\theta=\pi/2$ or $\theta=-\pi/2$.  This observation suggests that the $2$-sector IS geometry, which coincides with the conventional STARS, has a blind sensing area, and hence might not be suitable for the sensing scenarios without any prior knowledge of the target angle. }

\subsubsection{For $L=3$} we consider a periodicity of $\theta\in \left(0, ~2\pi/3\right)$, which is divided into two phases, i.e., $\theta\in \left(0, ~\pi/3\right)$ where  the target is only illuminated by sector $3$ and $\theta\in \left( \pi/3, ~2\pi/3\right)$ where  the target is illuminated by sector $1$ and sector $3$ simultaneously.

For $\theta\in \left(0, ~\pi/3\right)$, $e\left(\theta \right)$ and $ r^2\left(\theta \right)$ are re-expressed as
\begin{subequations}\label{eq_L_31}
  \begin{align}\label{eq_L_31_e}
    e\left(\theta \right)|_{L=3}=&\frac{P^{\mathrm{tr}}F^2(\theta,~3)}{3}N, \\\label{eq_L_31_r}
    r^2\left(\theta \right)|_{L=3}\overset{N\uparrow}{\approx}&\frac{\pi^2F^2\left(\theta,~3 \right)}{324}N^3,
\end{align}  
\end{subequations}
while for $\theta\in \left( \pi/3, ~2\pi/3\right)$, $e\left(\theta \right)$ and $ r^2\left(\theta \right)$ are re-expressed as
\begin{subequations}\label{eq_L_32}
  \begin{align}\label{eq_L_32_e}
    e\left(\theta \right)|_{L=3}=&\frac{P^{\mathrm{tr}}\sum_{l=1,3}F^2(\theta,~l)}{3}N, \\\label{eq_L_32_r}
    r^2\left(\theta \right)|_{L=3}\overset{N\uparrow}{\approx}&\frac{\sum_{l=1,3}\pi^2{F^2\left(\theta,~l \right) }}{324}{N^3}.
\end{align}  
\end{subequations}
 {From \eqref{eq_L_31} and   \eqref{eq_L_32}, we can deduce that, if $F(\theta)$ is uniform with respect to (w.r.t.) $\theta$ (corresponding to the half-space isotropic antenna patterns as will be described later in Section \ref{sec3-3: scaling_summary}), $e(\theta)$ and $r^2(\theta)$ in the second phase, i.e., $\theta\in \left( \pi/3, ~2\pi/3\right)$, are generally larger than their counterparts in the first phase, i.e., $\theta\in \left(0, ~\pi/3\right)$, as the target in the second phase is illuminated by two sectors, compared with only one sector in the first phase. This will result in non-uniform MSE performance across $\theta$, which is undesirable for full-space wireless sensing where the performance of the worst case of angle estimation should be guaranteed.}

\subsubsection{For $L=4$} each target is illuminated by two aligned sectors. We assume the target is illuminated by sectors $1$ and $4$ as depicted in Fig. \ref{fig_system_model}. Hence, $e\left(\theta \right)$ and $ r^2\left(\theta \right)$ are rewritten as
\begin{subequations}\label{eq_L_4}
  \begin{align}\label{eq_L_4_e}
    e\left(\theta \right)|_{L=4}=&\frac{\sum_{l=1,~4} P^{\mathrm{tr}}F^2(\theta,~l)}{4}N, \\\label{eq_L_4_r}
    r^2\left(\theta \right)|_{L=4}=&\frac{\sum_{l=1,~4}F^2(\theta,~l)\left(2\sin^2(\theta-\phi_l)+1\right)\pi^2}{768}N^2.
\end{align}  
\end{subequations}
  {For $L=4$, if assuming uniform $F(\theta)$, we can readily deduce that $e\left(\theta \right)$ and $ r^2\left(\theta \right)$ become uniform across $\theta$. Hence, the MSE performance for the $4$-sector IS self-sensing is uniform w.r.t. $\theta$, which is desirable for full-space wireless sensing.} 

 {The expressions from \eqref{eq_L_2} to \eqref{eq_L_4} also indicate that
$F\left(\theta\right)$
directly affects $e\left(\theta \right)$ and $r^2\left(\theta \right)$, which further impacts the sensing performance. To better characterize this effect, in the following, we consider two specific antenna patterns, i.e., the half-space isotropic and half-space directive antenna patterns, to facilitate deriving their performance scaling laws for comparison. }

\subsection{Numerical scaling laws of $e\left( \theta\right)$ and $r^2\left( \theta\right)$ with specific antenna patterns} \label{sec3-3: scaling_summary}
To derive the numerical scaling laws,  {we first specify the mathematical expressions of the gain of antenna patterns in Section \ref{sec_AP}. Next, we substitute them into the equations derived in Section \ref{sec3-2:e_r2}, and obtain the numerical scaling laws in Section \ref{sec_scaling_table}.} 

\subsubsection{Gain of different antenna patterns} \label{sec_AP}
 {The gain of the half-space isotropic and the half-space directive antenna patterns are specifically given as follows.}

First, for the half-space isotropic antenna pattern,  {the antenna gain towards a target at $\theta_l$ in the L-CCS $x_l-y_l  $ is given by }
\begin{align}
\label{eq_antenna_pattern_iso_theta}  
G^{\mathrm{Iso}}\left(\theta_l\right)=&   \begin{cases}
2,&\cos(\theta_l) \geq 0,\\
0,&\text{otherwise,}
\end{cases}
\end{align}
for which the details are given in Appendix \ref{app:AP derive}.  {The gain of the half-space isotropic antenna pattern} has been normalized by the total radiated power for comparison fairness.
 
 {Second, for the half-space directive antenna pattern in \cite{li2023beyond}, the antenna gain towards a target at $\theta_l$ in the L-CCS $x_l-y_l  $ is given by (normalized by the total radiation power)}
\begin{align}\label{eq_antenna_pattern_dir}  
G^{\mathrm{Dir}}\left(\theta_l\right)=&  \begin{cases}
2\left(\alpha_{\mathrm{L}}+1\right)\cos^{\alpha_\mathrm{L}}\left(\theta_l \right),&\cos(\theta_l)\geq 0,\\
0,&\text{otherwise,}
\end{cases}\end{align}
with $\alpha_{\mathrm{L}}=\log \left(0.5\right)/\log \left(\cos\left(\frac{\pi}{L}\right)\right)$. $\alpha_{\mathrm{L}}$ is set to align the half-power beamwidth of the antenna's radiation pattern with the concentrated coverage of each sector, i.e., $2\pi/L$. Notice that for $L=2$, each sector has to cover $180^o$ space, and the half-space directive antenna pattern in \eqref{eq_antenna_pattern_dir} boils down to the half-space isotropic antenna pattern in   \eqref{eq_antenna_pattern_iso_theta}.  More details are given in Appendix \ref{app:AP derive}.

\subsubsection{Summary and analysis of the numerical scaling laws}\label{sec_scaling_table}
\begin{table*}
     \centering
     \begin{tabular}{ |p{0.3cm}|p{2.3cm}|p{4.8cm}|p{4.4cm}|p{1.4cm}|p{1.7cm}|  }
    \hline
    \multicolumn{6}{|c|}{The law of scaling terms (\textbf{$ \mathrm{CRB}\left( \theta\right)\approx\frac{\sigma^2}{4|\alpha|^2\Gamma(\theta)}\left[  e \left(\theta \right) r^2 \left(\theta \right) \right]^{-1}$})} \\
    \hline
    \textbf{$L$}   &  Antenna pattern &\textbf{ $ r^2\left( \theta\right)$ } &   \textbf{$ e\left( \theta\right)$} & \textbf{$\mathbb{E}_{\theta}\left\lbrace r^2\left(\theta \right) \right\rbrace$  } &  \textbf{$\mathbb{E}_{\theta}\left\lbrace e\left(\theta \right) \right\rbrace$  }    \\
    \hline
    \multirow{2}{*}{2}& Isotropic&  \multirow{2}{*}{$\pi^2{N^3}\cos^2\left(\theta\right)/48$}&\multirow{2}{*}{$P^{\mathrm{tr}}{N^2}$ }&\multirow{2}{*}{$0.102  N^{3}$ }  & \multirow{2}{*}{$P^{\mathrm{tr}}{N^2}$  }  \\
     \cline{2-2}
      & Directive&  & & &   \\
      \hline
     \multirow{4}{*}{3}& Isotropic, $\theta\in \left(0, ~\pi/3\right)$ &${\pi^2N^3}/{162}$& ${2P^{\mathrm{tr}}N^2}/{3}$ & $0.061N^3$& ${0.67P^{\mathrm{tr}}N^2}$ \\
    \cline{2-6}
     & Isotropic, 
     $\theta\in \left(\pi/3, ~2\pi/3\right)$ &${\pi^2N^3}/{81}$& ${4P^{\mathrm{tr}}N^2}/{3}$ & $ 0.121N^3$& ${1.33P^{\mathrm{tr}}N^2}$ \\\cline{2-6} &Directive, $\theta\in \left(0, ~\pi/3\right)$
&${\pi^2N^3\cos\left({\theta}-\phi_3 \right)}/{81}$& ${4P^{\mathrm{tr}}N^2\cos\left(\theta-\phi_3\right)}/{3}$ & \multirow{2}{*}{$0.116N^3$} & \multirow{2}{*}{${1.33P^{\mathrm{tr}}N^2}$}\\
     \cline{2-4}
     & Directive, 
     $\theta\in \left(\pi/3, ~2\pi/3\right)$ &${\pi^2N^3\sin\left(\theta \right)}/{81}$& ${4P^{\mathrm{tr}}N^2\sin(\theta)}/{3}$ &  &  \\
     \hline
     \multirow{2}{*}{4}& Isotropic &$\pi^2{N^3}/{96}$& $P^{\mathrm{tr}}{N^2}$ & $ 0.102N^3$& $P^{\mathrm{tr}}{N^2}$ \\
    \cline{2-6}
    & Directive &${\left[3-\cos\left(4\theta\right)\right]\pi^2 N^3}/{256 }$& ${3P^{\mathrm{tr}}N^2}/{2}$&$ 0.116 N^{3}$& ${1.5P^{\mathrm{tr}}N^2}$\\
    \hline
\end{tabular}
     \caption{Table of the scaling terms of $e\left(\theta \right)$ and $ r^2\left(\theta \right)$ by employing the antenna patterns of \eqref{eq_antenna_pattern_iso_theta} and  \eqref{eq_antenna_pattern_dir} (\textit{Notice that only the highest order term of the scaling terms in Section \ref{sec3-3: scaling_summary} is adopted.}).  For details, please see Remarks \ref{rem:2} and \ref{rem:CRB}.  For the derivation details, please see Appendix \ref{Appendix_e_r}.}
     \label{tab2}
 \end{table*}

Next, we substitute the antenna patterns of \eqref{eq_antenna_pattern_iso_theta} and  \eqref{eq_antenna_pattern_dir} into the derived $e\left(\theta \right)$ and $ r^2\left(\theta \right)$, and obtain their numerical scaling laws which have more insightful forms. Additionally, we also take expectations of these metrics over $\theta$ for an overall performance comparison. The results are summarized in Table \ref{tab2}, which provides a clear overview of the impacts of multi-sector geometries and antenna patterns on the sensing performance. In the following, ''isotropic antenna pattern'' is adopted as an abbreviation for ''half-space isotropic antenna pattern'' and ''directive antenna pattern'' is adopted as an abbreviation for ''half-space directive antenna pattern''.

\begin{nrem}\label{rem:CRB}
From Table \ref{tab2}, we observe that:
    \begin{itemize}
        \item {Using the directive antenna pattern always achieves better performance than using the isotropic antenna pattern,  $\forall~L$, by comparing \footnote{For $L=3$, 
        averaging $ e\left(\theta \right)$ and $r^2\left(\theta \right)$  is taken over all angles, i.e., 
        $0\leq\theta\leq \pi/3$ and $ \pi/3\leq\theta\leq 2\pi/3$.} $\mathbb{E}\left\lbrace e\left(\theta \right) \right\rbrace$ and $\mathbb{E}\left\lbrace r^2\left(\theta \right) \right\rbrace$ straightforwardly, where the directive counterparts are generally larger than the isotropic counterparts except the case of $L=2$. }
        \item {For the isotropic antenna pattern,  only the $4$-sector IS configuration possesses (almost) uniform $ e\left(\theta \right)$ and $r^2\left(\theta \right)$ across $\theta$, and will be shown to perform the best in terms of MSE by simulations. In contrast, the geometry of $L=2$ shows dynamic $r^2(\theta)$ because of the term $\cos^2\left(\theta\right)$.  The geometry of $L=3$ also features highly varying  $ e\left(\theta \right)$ and $r^2\left(\theta \right)$   w.r.t. $\theta$. }
        \item {For the directive antenna pattern, $\mathbb{E}\left\lbrace e\left(\theta \right) \right\rbrace$ or $\mathbb{E}\left\lbrace r^2\left(\theta \right) \right\rbrace$ increases as a function of $L$. This naturally makes $L=4$ (among $L=2,~3,~4$ configurations) the best geometry. Noticeably, given the directive antenna pattern, $ e\left(\theta \right)$ and $r^2\left(\theta \right)$ for $L=3$ between different phases (i.e., $0\leq\theta\leq \pi/3$ and $\pi/3\leq\theta\leq 2\pi/3$) share similar mathematical forms, which makes $ e\left(\theta \right)$ and $r^2\left(\theta \right)$ roughly uniform across $\theta$ (more details given in Appendix \ref{Appendix_e_r}). As a result, with directive antenna patterns, the $3$-sector IS significantly outperforms the $2$-sector IS, as will be demonstrated by the simulation results. } 
       
    \end{itemize}

\end{nrem}

\section{Simulation Results}
\label{sec4: simulation}
 {In this section, we evaluate the performance of the proposed multi-sector IS self-sensing system via numerical results.  The simulations are performed on an 802.11p standard (wireless access
in vehicular environments), with sub-carrier frequency $5.19$ GHz, noise power $\sigma^2=-80$ dBm, and LoS channels. We assume the target scattering coefficient $\alpha_\mathrm{T}=0$ dB, and assume the target is uniformly randomly distributed on a circle which centers at the $x_0-y_0$'s origin with a radius of $519$ m. In terms of the configuration of the multi-sector IS self-sensing structure, we assume the distance between the active source controller and sector $1$ is $d_{\mathrm{CI}}=0.5$ m to guarantee the far-field assumption. Particularly, we mainly focus on evaluating and analyzing the sensing performance of multi-sector IS geometries for $L=2,~3,~4$ as shown in Fig. \ref{fig_system_model}, where $L=2$ corresponds to the conventional STARS configuration and serves as a benchmark for $L>2$. In addition, we also provide simulation results for general configurations, for example, $L=5$ and $6$. For simplicity, we consider ULA for IS elements and sensors at each sector as shown in Fig. \ref{fig_system_model}, and we set $N_\mathrm{S}=N_\mathrm{I}\triangleq N$ and $\mathbf{F}_{\mathrm{I}}\left(\theta\right)=\mathbf{F}_{\mathrm{S}}\left(\theta\right)\triangleq\mathbf{F}\left(\theta\right)$. In the following, the examined MSE is divided into two parts, the instantaneous MSE   w.r.t. target angles, and the overall MSE. Specifically, the instantaneous MSE is defined as $\mathrm{MSE}\left(\theta\right)\triangleq \mathbb{E}_{z}\left\lbrace \|\widehat{\theta}-\theta\|^2\right\rbrace$ where $\widehat{\theta}$ is the ML estimation of $\theta$ and $z$ is the random AWGN, and the overall MSE is defined as $\mathrm{MSE}\triangleq \mathbb{E}_{\theta}\left\lbrace \mathrm{MSE}\left(\theta\right)\right\rbrace$. Similarly, the overall $\mathrm{CRB}$ is defined as 
$\mathrm{CRB}\triangleq \mathbb{E}_{\theta}\left\lbrace \mathrm{CRB}\left(\theta\right)\right\rbrace$.}

\subsection{The instantaneous MSE performance}\label{sec4-1:sim1}
\begin{figure}
    \centering
    \includegraphics[width=3 in]{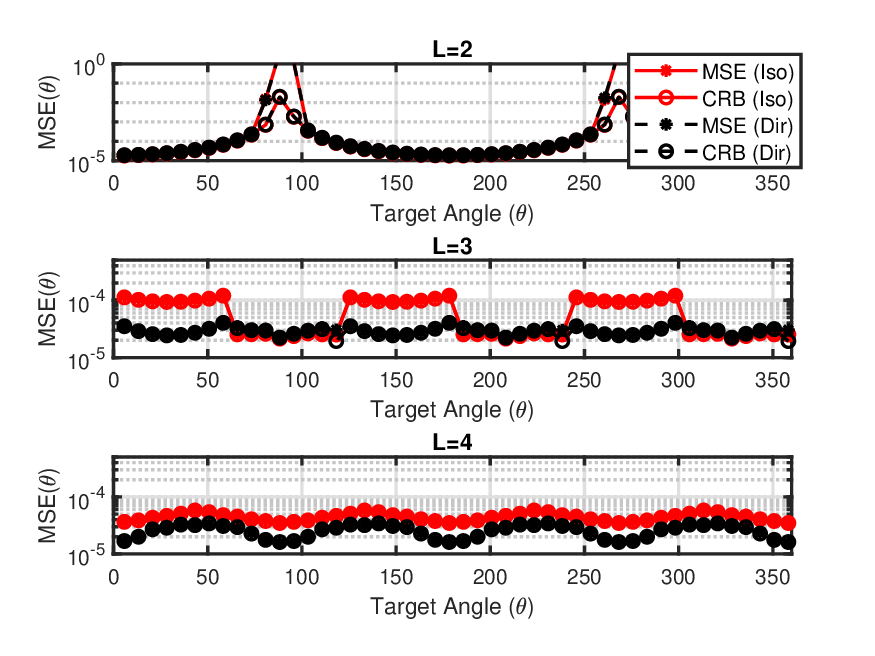}
    \caption{The instantaneous $\mathrm{MSE}\left(\theta\right)$ and $\mathrm{CRB}\left(\theta\right)$ w.r.t. $\theta$ for $N=24$,  $P^{\mathrm{tr}}=45$ dBm and both half-space isotropic and half-space directive antenna patterns. Herein, ''Iso'' refers to employing the half-space isotropic antenna pattern and ''Dir'' refers to employing the half-space directive antenna pattern. }
    \label{fig_sim1_mse}
\end{figure}

\begin{figure*}[!t]
\centering
\subfloat[]{\includegraphics[width=3.0 in]{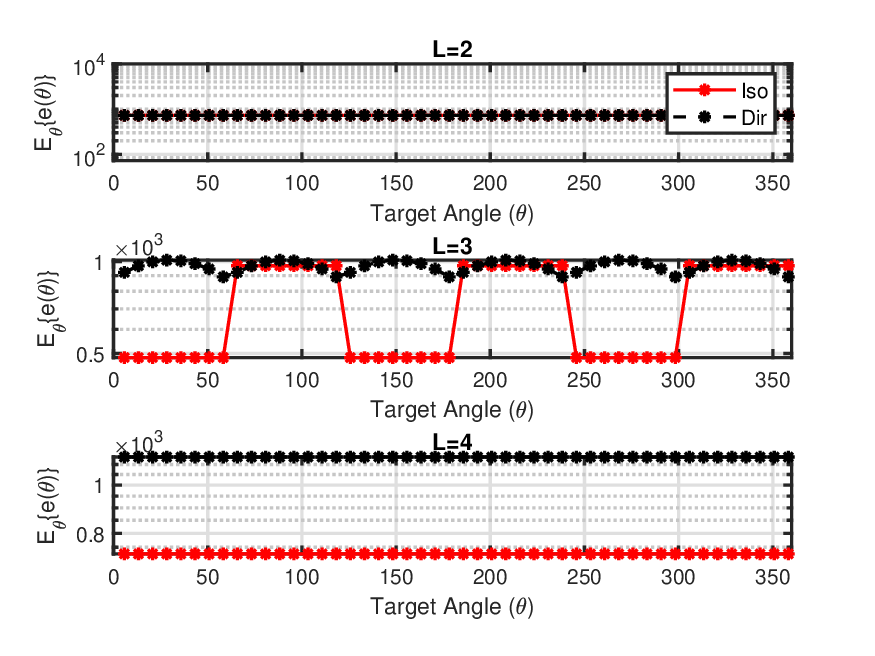}%
\label{fig_pow}}
\hfil
\subfloat[]{\includegraphics[width=3.0 in]{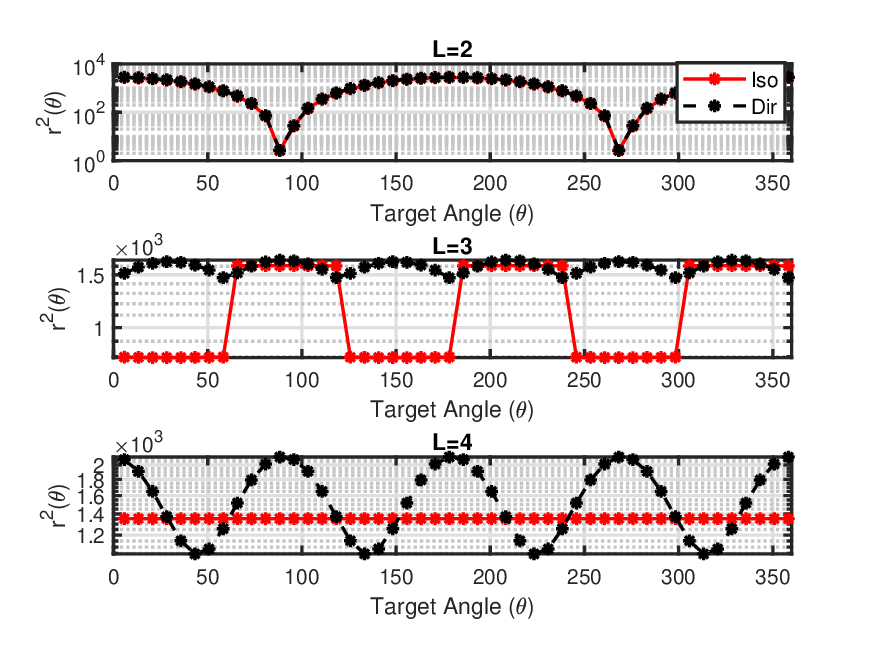}%
\label{fig_rate_of_angle}}
\caption{Corresponding to Fig. \ref{fig_sim1_mse}, the two fundamental components of the CRB therein, with (a) the probing power on the target across $\theta$, (b) the squared rate of target angle across $\theta$. The numerical results are in accordance with Table \ref{tab2}.}
\label{fig_deter_L32}
\end{figure*}

{We first evaluate the instantaneous $\mathrm{MSE}\left( \theta\right)$ of the proposed multi-sector IS self-sensing system as shown in Fig. \ref{fig_sim1_mse}, where different multi-sector IS geometries ($L=2,~3,~4$) are compared under different antenna patterns.   In Fig. \ref{fig_sim1_mse}, we also plot the corresponding $\mathrm{CRB}\left( \theta\right)$ in \eqref{eq_CRB_conclusion}, which is shown to be precisely aligned with $\mathrm{MSE}\left( \theta\right)$. From Fig. \ref{fig_sim1_mse}, we observe that $\mathrm{MSE}\left( \theta\right)$ for $L=4$ shows the best stability across $\theta$.  For the isotropic antenna pattern, there is a notable performance degradation when $\theta$ approaches $90^o$ or $270^o$ for $L=2$, whereas, for $L=3$, performance significantly worsens when the target is only illuminated by one sector (i.e., $0^o\leq \theta \leq 60^o$). In addition, using the directive antenna pattern generally achieves better $\mathrm{MSE}\left( \theta\right)$ than using the isotropic antenna pattern, except the case of $L=2$ where the directive antenna pattern in \eqref{eq_antenna_pattern_dir} boils down to the isotropic antenna pattern in \eqref{eq_antenna_pattern_iso_theta}.}
 
 {To provide more insights into Proposition \ref{prop1},  Fig. \ref{fig_pow} and Fig. \ref{fig_rate_of_angle} depict the two fundamental components of  the $\mathrm{CRB}\left(\theta \right)$ in Fig. \ref{fig_sim1_mse}, i.e., $e\left(\theta \right)$ and $ r^2\left(\theta \right)$.  For $L=2$, it is observed that $e\left(\theta \right)$ is stable across different $\theta$ while  $r^2\left(\theta \right)$ deteriorates sharply when $\theta$ approaches around $90^o$ or $270^o$. The degradation of $r^2\left(\theta \right)$ results in more fluctuating $\mathrm{CRB}\left(\theta \right)$ and  $\mathrm{MSE}\left(\theta \right)$ in  Fig. \ref{fig_sim1_mse}, which highlights the essential role played by the geometry of the sensors. In contrast, for $L=3$, an interesting observation is that, both $e\left(\theta \right)$ and  $r^2\left(\theta \right)$ exhibit more stability with higher values across $\theta$ when using the directive antenna pattern, compared with using the isotropic antenna pattern. In this context, the adopted directive antenna pattern compensates for the performance deficiency of the $3$-sector IS geometry, i.e., targets at different angles are illuminated by a different number of sectors. The compensation is achieved by concentrating each sector's radiation more towards the angles where the target is illuminated by one sector only. In the contrary, for targets that can be seen from two sectors, the radiation signals from both sectors become weaker when employing the directive antenna pattern compared with the isotropic antenna pattern. Finally, for  $L=4$, we observe that the directive antenna pattern offers significant gain on $e(\theta)$, despite leading to slight fluctuations on $r^2(\theta)$. This results in better $\mathrm{MSE}\left(\theta \right)$ in  Fig. \ref{fig_sim1_mse}, which highlights the advantages of the directive antenna pattern for enhancing the power efficiency.}

\subsection{The {overall  MSE} performance} \label{sec4-1:sim2}

\begin{figure}
    \centering
    \includegraphics[width=3 in]{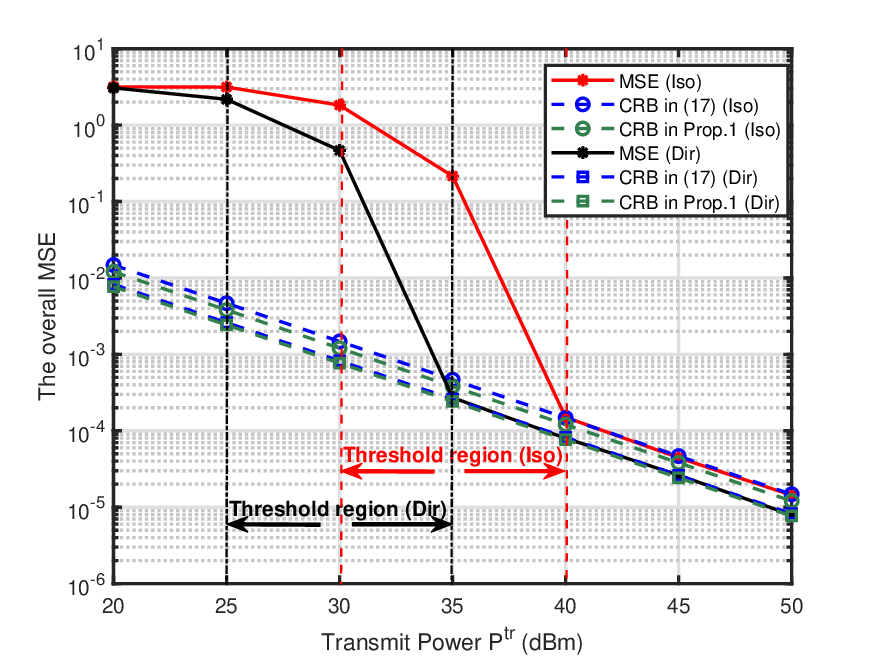}
    \caption{The overall $\mathrm{MSE}$ averaging over $\theta$, as a function of transmit power (in proportional to SNR), with different antenna patterns for $N=24$ and $L=4$.}
    \label{fig_diff_SNR}
\end{figure}

  In this sub-section, we first plot the overall $\mathrm{MSE}$ and the overall $\mathrm{CRB}$ as a function of transmit power for $L=4$, as shown in Fig. \ref{fig_diff_SNR}. Specifically, Fig. \ref{fig_diff_SNR} compares the overall $\mathrm{MSE}$ with the overall $\mathrm{CRB}$ in \eqref{eq_CRB_conclusion} and the approximated $\mathrm{CRB}$ in \eqref{eq_CRB_appr} (in Proposition \ref{prop1} with $\Gamma(\theta)$ in \eqref{eq_CRB_appr} being substituted by $\Gamma(\theta)=1$ in accordance with Remark \ref{rem:Gamma}). It shows that the approximated $\mathrm{CRB}$ from \eqref{eq_CRB_appr} is closely aligned with the precise $\mathrm{CRB}$ in \eqref{eq_CRB_conclusion} and the $\mathrm{MSE}$ in the high SNR region. Moreover, from Fig. \ref{fig_diff_SNR}, it becomes more evident that using the directive antenna pattern achieves better MSE performance compared with using the isotropic antenna pattern. Specifically, the overall $\mathrm{MSE}$ of using the directive antenna pattern exhibits the threshold region phenomenon\footnote{The threshold region phenomenon denotes a sensing phenomenon of a drastic improvement in sensing performance from being poor to being excellent. The SNR region that exhibits this improvement is named the threshold region. Usually, the MSE performance closely aligns with CRB after the threshold region.} ahead of that of using the isotropic antenna pattern by roughly $5$ dBm transmit power.

\begin{figure}[t]
\centering
\includegraphics[width=3 in]{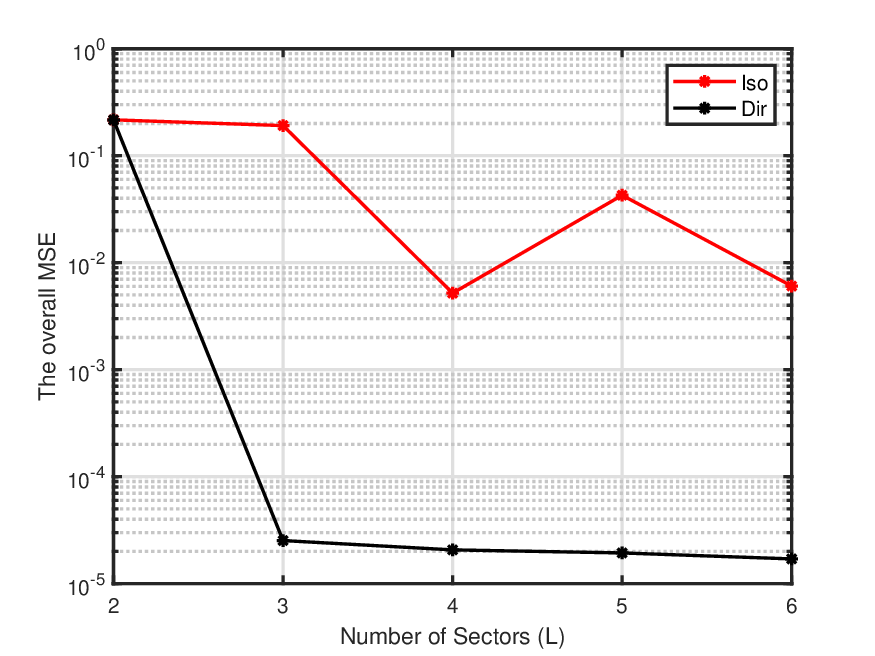}%
\caption{ The average MSE by taking the expectation of $\theta$, as a function of the number of sectors ($L$) for $N=60$ with different numbers of sectors, with $P^{\mathrm{tr}}=30$ dBm.}
\label{fig_diff_L}
\end{figure}

Next, Fig. \ref{fig_diff_L} compares the overall $\mathrm{MSE}$ between different multi-sector IS configurations with different $L$. Firstly, Fig. \ref{fig_diff_L} shows that for both antenna patterns, $L=4$ gives the best overall $\mathrm{MSE}$ among $L=2,~3,~4$, which coincides with the simulation results in Section \ref{sec4-1:sim1}. {Fig. \ref{fig_diff_L} also explores a broader range of $L$, i.e., including $L=5$ and $L=6$. It is observed that, with the isotropic antenna pattern, the overall $\mathrm{MSE}$ of odd $L$ is worse than that of even $L$ in the multi-sector IS self-sensing system\footnote{$L=2$ is excluded here since it represents the conventional STARS configuration and serves as a benchmark.}. This discrepancy mirrors the comparison between $L=3$ and $L=4$ as discussed in Section  \ref{sec4-1:sim1}, where the multi-sector IS geometry with an even $L$ benefits from relatively more uniform $\mathrm{MSE}\left(\theta \right)$.  Moreover, with the directive antenna pattern, the overall $\mathrm{MSE}$ consistently improves with the increasing $L$. This improvement stems from the higher directivity of directive antenna patterns with larger $L$, which particularly benefits $e(\theta)$ as explained in Section \ref{sec3-3: scaling_summary}.

Finally, Fig. \ref{fig_scaling} validates the numerical scaling laws in Section \ref{sec3-3: scaling_summary} as a function of the number of IS elements/sensors ($N$) given different $L$ and different antenna patterns. The figure shows that increasing $N$ enhances both $\mathbb{E}_{\theta}\left\lbrace e(\theta)\right\rbrace$ and $\mathbb{E}_{\theta}\left\lbrace r^2(\theta)\right\rbrace$ for the target angle estimation, which is in accordance with Table \ref{tab2}. Moreover, Fig. \ref{fig_scaling} illustrates the advantages of employing a directive antenna pattern for both 
$\mathbb{E}_{\theta}\left\lbrace e(\theta)\right\rbrace$ and $\mathbb{E}_{\theta}\left\lbrace r^2(\theta)\right\rbrace$, particularly in terms of $\mathbb{E}_{\theta}\left\lbrace e(\theta)\right\rbrace$.

\begin{figure}[h]
    \centering
    \begin{subfigure}{0.4\textwidth}
        \centering
        \includegraphics[width=3 in]{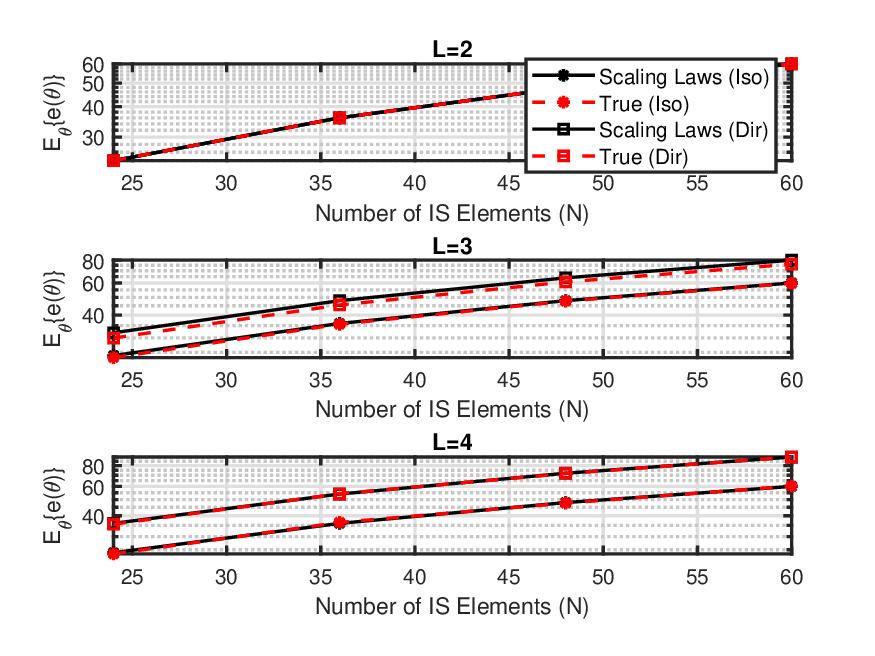}
        \caption{Probing power on the target}
        \label{fig_scaling_pow}
    \end{subfigure}%
    \hfill 
    \begin{subfigure}{0.4\textwidth}
        \centering
        \includegraphics[width=3 in]{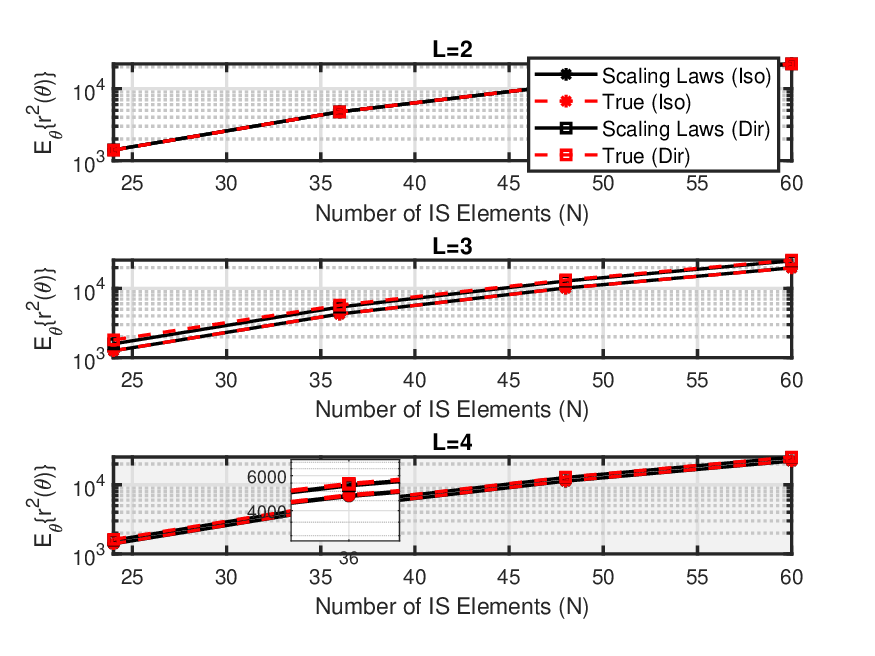}
        \label{fig_scaling_rate}
        \caption{Squared rate of target angle}
    \end{subfigure}
    \caption{The scaling laws in Table \ref{tab2} in Section \ref{sec3-3: scaling_summary} as a function of the number of IS/sensors elements, with (a) the scaling laws of the probing power on the target averaged over target angles; (b) the scaling laws of the squared rate of target angle averaged over target angles.}
    \label{fig_scaling}
\end{figure}


\section{Conclusions}\label{sec5:conclusion}
This paper proposed a new multi-sector IS self-sensing system to achieve full-space coverage for wireless sensing. Specifically, we developed an ML estimator of the target angle for the proposed multi-sector IS self-sensing system, along with the corresponding performance limits in terms of the CRB. The analysis of the CRB revealed that it primarily consists of two fundamental components, the probing power on the target and the squared rate of target angle. We showed that the multi-sector IS not only benefited from improved probing power but also enhanced the squared rate of target angle by offering more freedom in geometries. Moreover, it was verified that using directive antenna patterns can further enhance the sensing performance of multi-sector IS. The simulation results revealed that, among the geometries examined ($L=2,~3,~4$), the $4$-sector IS configuration achieves the best overall sensing performance, by providing the most uniform MSE across all target angles.

\appendix
\section{Appendix}

\subsection{Proof of \eqref{eq_CRB_Gamma_APP}}\label{appendix_CRB_largeM_app}
We first derive two useful results regarding the receive steering vector $\mathbf{a}_l\left( \theta\right)$ from \eqref{eq_re_signal2} as following,
\begin{subequations}
    \begin{align}\nonumber
    \mathbf{a}_l\left( \theta\right) = &\exp\left\lbrace j\pi \mathbf{P}^{T}_{\mathrm{S},~l}\mathbf{u}\left(\theta\right)\right\rbrace =\exp\left\lbrace j\pi \left(\mathbf{P}^{\mathrm{L}}_\mathrm{S}-\mathbf{p}^{\mathrm{L}}_{0,~l}\right)^{T}{\mathbf{Q}^{\mathrm{L}}_{l}}\mathbf{u}\left(\theta \right)\right\rbrace\\
    =&\exp\left\lbrace j\pi \mathbf{p}^{\mathrm{L}}_\mathrm{S,~y}\sin\left(\theta-\phi_l \right)+j\pi\frac{M_{\mathrm{S}}\cos\left( \theta-\phi_l\right)}{2\tan\left(\pi/L\right)}\right\rbrace,
    \end{align}
\end{subequations}
and 
\begin{subequations}
    \begin{align} \dot{\mathbf{a}}_l\left( \theta\right)=& \frac{\partial \mathbf{a}_l\left( \theta\right)}{\partial \theta}= j\pi\mathrm{diag}\left\lbrace \mathbf{p}^{\mathrm{L}}_\mathrm{S,~y}\cos\left(\theta-\phi_l\right)-\frac{M_{\mathrm{S}}\sin\left(\theta-\phi_l\right)}{2\tan\left(\pi/L\right)}\right\rbrace{\mathbf{a}_l}\left(\theta\right).    \end{align}
\end{subequations}

In \eqref{eq_CRB_Gamma_APP}, the approximation claims (similarly for $\Gamma_\mathrm{I}(\theta)$)
\begin{subequations}
\begin{align}\label{eq_APPEN_CRB_1_Gamma}
\nonumber
\Gamma_\mathrm{S}(\theta)=&
1-\frac{\|{ \dot{\mathbf{a}}^\mathrm{p}\left(\theta\right)}^H{\mathbf{a}^\mathrm{p}\left(\theta\right)}\|^2}{{ \dot{\mathbf{a}}^\mathrm{p}\left(\theta\right)}^H{ \dot{\mathbf{a}}^\mathrm{p}\left(\theta\right)}{\mathbf{a}^\mathrm{p}\left(\theta\right)}^H{\mathbf{a}^\mathrm{p}\left(\theta\right)}}\\
     \overset{M_\mathrm{S}\uparrow}{=}&1-\frac{\left[\sum_{l}F_{\mathrm{S}}^2(\theta,~l)\sin\left(\theta-\phi_l \right)\right]^2}{\sum_{l}F_{\mathrm{S}}^2(\theta,~l)\sum_{l}F_{\mathrm{S}}^2(\theta,~l)\left[1+\cos^2\left(\theta-\phi_l \right)\tan^2(\pi/L)/3\right]}.
\end{align}
\end{subequations} 
Note that \eqref{eq_APPEN_CRB_1_Gamma} is attained as follows. 

\begin{subequations}\label{eq_APPEN_CRB_colloray}First,
\begin{align} \label{eq_app_a_derivative_a}{\dot{\mathbf{a}}^\mathrm{p}\left(\theta\right)}^H{\mathbf{a}^\mathrm{p}\left(\theta\right)}
    \overset{M\uparrow}{=}\sum_l -j\pi F_\mathrm{S}^2\left(\theta,~l\right)\frac{M_\mathrm{S}^2}{2\tan\left(\pi/L\right)}\sin\left( \theta-\phi_l\right).
    \end{align}
    Second,
     \begin{align}\label{eq_app_a_a}
{\mathbf{a}^\mathrm{p}\left(\theta\right)}^H{\mathbf{a}^\mathrm{p}\left(\theta\right)}=\sum_{l}{F^2_\mathrm{S}\left(\theta,~l\right)}M_\mathrm{S}.  \end{align}
    Third,
     \begin{align}\label{eq_app_a_derivative_a_derivative}
    { \dot{\mathbf{a}}^\mathrm{p}\left(\theta\right)}^H { \dot{\mathbf{a}}^\mathrm{p}\left(\theta\right)}
        \overset{M_\mathrm{S}\uparrow}{=}& \sum_{l}\pi^2{F^2_\mathrm{S}\left(\theta,~l \right) }\left[\frac{\cos^2\left(\theta-\phi_l \right)}{12}+\frac{\sin^2\left(\theta-\phi_l \right)}{4\tan^2\left( \pi/L\right)}\right]M_\mathrm{S}^3.
    \end{align}
\end{subequations}
Finally, combining \eqref{eq_app_a_derivative_a}, \eqref{eq_app_a_a} and \eqref{eq_app_a_derivative_a_derivative}, we have the following relationship,
 \begin{subequations}\label{eq_APPEN_gamma_conclusion}
     \begin{align}\nonumber
&\frac{\|{ \dot{\mathbf{a}}^\mathrm{p}\left(\theta\right)}^H{\mathbf{a}^\mathrm{p}\left(\theta\right)}\|^2}{{ \dot{\mathbf{a}}^\mathrm{p}\left(\theta\right)}^H{ \dot{\mathbf{a}}^\mathrm{p}\left(\theta\right)}{\mathbf{a}^\mathrm{p}\left(\theta\right)}^H{\mathbf{a}^\mathrm{p}\left(\theta\right)}} 
     \\
     \overset{M_\mathrm{S}\uparrow}{=}&\frac{\left(\sum_l  F_\mathrm{S}^2\left(\theta,~l\right)\frac{\sin\left( \theta-\phi_l\right)}{2\tan\left(\pi/L\right)}\right)^2}{\sum_l{F}^2_\mathrm{S}\left(\theta,~l\right) \left( \sum_{l}{F^2_\mathrm{S}\left(\theta,~l \right) }\left[\frac{\cos^2\left(\theta-\phi_l \right)}{12}+\frac{\sin^2\left(\theta-\phi_l \right)}{4\tan^2\left( \pi/L\right)}\right]\right)},\\
     =&1-\Gamma_\mathrm{S}(\theta),
     \end{align}
 \end{subequations}
 which thus completes the proof.

\subsection{Proof of \eqref{eq_CRB_appr} in Proposition 1}\label{appendix_CRB_app}

Based on the assumption of $N_\mathrm{I}=N_\mathrm{S}\triangleq N$ and $\mathbf{F}_{\mathrm{I}}\left(\theta\right)=\mathbf{F}_{\mathrm{S}}\left(\theta\right)\triangleq \mathbf{F}\left(\theta\right)$ in Proposition 1, we have $\mathbf{b}^\mathrm{p}\left(\theta \right)=\mathbf{a}^\mathrm{p}\left(\theta \right)^*$ from \eqref{eq_re_signal2} and $\Gamma_\mathrm{I}(\theta)=\Gamma_\mathrm{S}(\theta)\triangleq \Gamma(\theta)$, which simplifies the CRB in \eqref{eq_CRB_conclusion} as
\begin{subequations}
    \begin{align}\nonumber
      \mathrm{CRB}\left( \theta\right)=&\frac{\sigma^2}{2|\alpha|^2\Gamma\left(\theta \right)}\left[{\mathbf{b}}^\mathrm{p}\left(\theta \right)^T \mathbf{R}_X{\mathbf{b}}^\mathrm{p}\left(\theta \right)^*\dot{\mathbf{a}}^\mathrm{p}\left(\theta \right)^H\dot{\mathbf{a}}^\mathrm{p}\left(\theta \right)
      +\right.\\& \left.\dot{\mathbf{b}}^\mathrm{p}\left(\theta \right)^T\mathbf{R}_X\dot{\mathbf{b}}^\mathrm{p}\left(\theta \right)^*{\mathbf{a}}^\mathrm{p}\left(\theta \right)^H{\mathbf{a}}^\mathrm{p}\left(\theta \right)\right]^{-1}\\\label{eq_APPEN_CRB_APPR1}
      \approx&\frac{\sigma^2}{4|\alpha|^2\Gamma(\theta)}\left[{\mathbf{b}}^\mathrm{p}\left(\theta \right)^T \mathbf{R}_X{\mathbf{b}}^\mathrm{p}\left(\theta \right)^*\dot{\mathbf{a}}^\mathrm{p}\left(\theta \right)^H\dot{\mathbf{a}}^\mathrm{p}\left(\theta \right)\right]^{-1}\\
     \triangleq&\frac{\sigma^2}{4|\alpha|^2\Gamma(\theta)}\left[e \left(\theta \right) r^2 \left(\theta \right) \right]^{-1},
    \end{align}
\end{subequations}
where the approximation in \eqref{eq_APPEN_CRB_APPR1} comes from that,  for large $M$,  $\dot{\mathbf{b}}^\mathrm{p}\left(\theta \right)^T\mathbf{R}_X\dot{\mathbf{b}}^\mathrm{p}\left(\theta \right)^*{\mathbf{a}}^\mathrm{p}\left(\theta \right)^H{\mathbf{a}}^\mathrm{p}\left(\theta \right)$ is asymptotic to ${\mathbf{b}}^\mathrm{p}\left(\theta \right)^T \mathbf{R}_X{\mathbf{b}}^\mathrm{p}\left(\theta \right)^*\dot{\mathbf{a}}^\mathrm{p}\left(\theta \right)^H\dot{\mathbf{a}}^\mathrm{p}\left(\theta \right)$. The proof is given as follows.

Given $\mathbf{b}_l^\mathrm{p}(\theta)=\mathbf{a}_l^\mathrm{p}(\theta)^*$, we have the following two approximations which directly lead to \eqref{eq_APPEN_CRB_APPR1},
\begin{subequations}
    \begin{align}
\dot{\mathbf{b}}^\mathrm{p}\left(\theta \right)^T\mathbf{R}_X\dot{\mathbf{b}}^\mathrm{p}\left(\theta \right)^*=P^{\mathrm{tr}}\sum_{l_1,l_2} {\dot{\mathbf{b}}^p_{l_1}}\left(\theta \right)^T\dot{\mathbf{b}}^p_{l_2}\left(\theta \right)^*\overset{(a)}{\approx}P^{\mathrm{tr}}\dot{\mathbf{a}}^\mathrm{p}\left(\theta \right)^H\dot{\mathbf{a}}^\mathrm{p}\left(\theta \right),
    \end{align}
    and
    \begin{align}{\mathbf{a}}^\mathrm{p}\left(\theta \right)^H{\mathbf{a}}^\mathrm{p}\left(\theta \right)
 \overset{(b)}{\approx} &\sum_{l_1,l_2} {{\mathbf{b}_{l_1}}^p}\left(\theta \right)^H{\mathbf{b}_{l_2}}^p\left(\theta \right)=\frac{1}{P^{\mathrm{tr}}}{{\mathbf{b}}^\mathrm{p}}\left(\theta \right)^H\mathbf{R}_X{\mathbf{b}}^\mathrm{p}\left(\theta \right),
    \end{align}
\end{subequations}
where $(a)$ comes from 
${\dot{\mathbf{a}_{l_1}}^p}\left(\theta \right)^H\dot{\mathbf{a}_{l_2}}^p\left(\theta \right)/\|\dot{\mathbf{a}_{l_1}}^p\left(\theta \right)\|^2 \rightarrow 0$ for $l_1\neq l_2$ for large $M$; $(b)$ comes from 
$\left[{{\mathbf{a}_{l_1}}^p}\left(\theta \right)^H{\mathbf{a}_{l_2}}^p\left(\theta \right)+{{\mathbf{a}_{l_2}}^p}\left(\theta \right)^H{\mathbf{a}_{l_1}}^p\left(\theta \right)\right]/\|{\mathbf{a}_{l_1}}^p\left(\theta \right)\|^2\rightarrow 0$ for $l_1\neq l_2$ for large $M$. For brevity, we prove the approximation in $(b)$ in the following (similar procedures can be obtained for proving $(a)$). Equivalently, the term in $(b)$ has the same decay rate as the following,
\begin{subequations}\label{eq_APP_diff_sector_0}
 \begin{align}
\Big|\frac{{{\mathbf{a}_{l_1}}^p}\left(\theta \right)^H{\mathbf{a}_{l_2}}^p\left(\theta\right)}{\|\mathbf{a}_{l_1}\left(\theta\right)\|^2}\Big|\overset{(c)}{\leq} &\Big|\frac{\exp\left( jM\pi\beta_1\right)\sum_m {\exp\left(j2\pi m\beta_2\right)}}{M}\Big|\\\label{eq_APPEN_prof_decay_rate}
=&\Big|\frac{\sin(M\pi\beta_2) }{M\sin(\pi\beta_2)}\Big|\leq \frac{1}{M\sin(\pi|\beta_2|)},
\end{align}
\text{with}
\begin{align}
\beta_1=&\left[\cos(\theta-\phi_{l_1})-\cos(\theta-\phi_{l_2})\right]/\tan(\pi/L)/2,\\\nonumber
\beta_2=&\left[\sin(\theta-\phi_{l_1})-\sin(\theta-\phi_{l_2})\right]/2\\\label{eq_APPEN_beta21_d}
=&\cos(\frac{2\theta-\phi_{l_1}-\phi_{l_2}}{2})\sin(\frac{\phi_{l_2}-\phi_{l_1}}{2}),
\end{align}
\end{subequations}
where $(c)$ assumes $F_{\mathrm{S}}(\theta,~l_1)\geq F_{\mathrm{S}}(\theta,~l_2)$. \eqref{eq_APPEN_prof_decay_rate} shows a decay rate of $1/M$ as long as the term $\sin(\pi|\beta_2|)$ is well lower-bounded. Therein, we notice that $|\beta_2|$ is lower-bounded by 
\begin{align}\label{eq_APPEN_beta2_d}
|\beta_2|{\geq}& \cos\left( \pi/L\right)\sin\left(\pi/L \right). 
\end{align}   
As an explanation, for an arbitrary pair of $\theta$ and $\phi_{l_1}$, the minimal absolute value (the worst case) of $\beta_2$ is achieved when $\sin(\theta-\phi_{l_2})$ has the same sign as $\sin(\theta-\phi_{l_1})$. In this context, assuming $\phi_{l_2}<\phi_{l_1}\leq \theta$, we have $2\pi/L\leq \phi_{l_1}-\phi_{l_2}\leq \pi-\pi/L$. Also, given  $0 < \theta-\phi_{l_1}\leq \pi-2\pi/L$ and $2\pi/L < \theta-\phi_{l_2}\leq \pi$ (from the condition of half-space antenna radiation), we have  $\pi/L\leq \frac{2\theta-\phi_{l_1}-\phi_{l_2}}{2}\leq \pi-\pi/L$, which results in \eqref{eq_APPEN_beta2_d}. The proof of \eqref{eq_APPEN_prof_decay_rate} and finally \eqref{eq_CRB_appr} are thus completed.

\subsection{The proof of 
 $\Gamma_{\mathrm{I/S}}(\theta)$ in \eqref{eq_CRB_Gamma_APP} fluctuating below $1$}\label{app:Gamma}
Equivalently, we show $\frac{\|{\dot{\mathbf{a}}^\mathrm{p}(\theta)}^H{\mathbf{a}^\mathrm{p}(\theta)}\|^2}{{\dot{\mathbf{a}}^\mathrm{p}(\theta)}^H{\dot{\mathbf{a}}^\mathrm{p}(\theta)}{\mathbf{a}^\mathrm{p}(\theta)}^H{\mathbf{a}^\mathrm{p}(\theta)}}$ in \eqref{eq_APPEN_gamma_conclusion} fluctuates above $0$.  Note that the following derivations can be readily extended to a larger $L$ with a higher accuracy. 

\subsubsection{For $L=2$}  we notice that ${\dot{\mathbf{a}}^\mathrm{p}(\theta)}^H{\mathbf{a}^\mathrm{p}(\theta)}=0$ always holds from \eqref{eq_CRB_Gamma_APP} since $\tan(\pi/2)=\infty$, which completes the proof.

\subsubsection{For $L=3$} we consider a periodicity of $\theta\in \left(0, ~2\pi/3\right)$, and have the following results. Firstly, the numerator in \eqref{eq_APPEN_gamma_conclusion} is further expressed as
   \begin{subequations}
    \begin{align} &\|{\dot{\mathbf{a}}^\mathrm{p}(\theta)}^H{\mathbf{a}^\mathrm{p}(\theta)}\|^2\overset{M\uparrow}{\approx}\frac{\pi^2M^4}{12}\left(\sum_l  F^2\left(\theta,~l\right)\sin\left( \theta-\phi_l\right) \right)^2\\
    =&\begin{cases}
        \frac{\pi^2M^4}{12}\left(  F^2\left(\theta,~3\right)\sin\left( \theta-\Phi_3\right) \right)^2,~&\theta\in \left(0, ~\pi/3\right),\\
        \frac{\pi^2M^4}{12}\left(\sum_{l=1,3}  F^2\left(\theta,~l\right)\sin\left( \theta-\Phi_l\right) \right)^2,~&\theta\in \left(\pi/3, ~2\pi/3\right)
    \end{cases}\\
    \leq &\begin{cases}
        \frac{\pi^2M^4}{48}  F^4\left(\theta,~3\right),~&\theta\in \left(0, ~\pi/3\right),\\
        \frac{\pi^2M^4}{48}\max\left( F^4\left(\theta,~1\right),~F^4\left(\theta,~3\right)\right),~&\theta\in \left(\pi/3, ~2\pi/3\right)
\end{cases}\\\label{eq_app_a_d_a_L3}
    \approx & \begin{cases}
    \frac{\pi^2M^4}{48}  F^4\left(\theta,~3\right),~&\theta\in \left(0, ~\pi/3\right),\\
        \frac{\pi^2M^4}{48}\max\left( F^4\left(\theta,~1\right),~F^4\left(\theta,~3\right)\right)~&\theta\in \left(\pi/3, ~2\pi/3\right).
    \end{cases}
    \end{align}
\end{subequations}
Similarly, the denominator in \eqref{eq_APPEN_gamma_conclusion} is further expressed as
\begin{subequations}
    \begin{align}\nonumber
&{\dot{\mathbf{a}}^\mathrm{p}(\theta)}^H{\dot{\mathbf{a}}^\mathrm{p}(\theta)}{\mathbf{a}^\mathrm{p}(\theta)}^H{\mathbf{a}^\mathrm{p}(\theta)}\\
=&\frac{\pi^2M^2}{12}\sum_lF^2\left(\theta,~l\right)\left[M^2-\cos^2\left(\theta-\phi_l \right)\right]\sum_lF^2\left(\theta,~l\right)\\\label{eq_app_a_d_a_d_aa_L3}
\approx&\frac{\pi^2M^4}{12}\left[\sum_lF^2\left(\theta,~l\right)\right]^2\\
\geq& \begin{cases}
    \frac{\pi^2M^4}{12}  F^4\left(\theta,~3\right),~&\theta\in \left(0, ~\pi/3\right),\\
        \frac{\pi^2M^4}{12}\max\left( F^4\left(\theta,~1\right),~F^4\left(\theta,~3\right)\right)~&\theta\in \left(\pi/3, ~2\pi/3\right).
\end{cases}
        \end{align}
\end{subequations}
    Combining \eqref{eq_app_a_d_a_L3} and \eqref{eq_app_a_d_a_d_aa_L3}, we have $\frac{\|{\dot{\mathbf{a}}^\mathrm{p}(\theta)}^H{\mathbf{a}^\mathrm{p}(\theta)}\|^2}{{\dot{\mathbf{a}}^\mathrm{p}(\theta)}^H{\dot{\mathbf{a}}^\mathrm{p}(\theta)}{\mathbf{a}^\mathrm{p}(\theta)}^H{\mathbf{a}^\mathrm{p}(\theta)}}\leq 1/4$.

\subsubsection{For $L=4$} we assume the target is illuminated by sector $1$ and sector $4$. For the isotropic antenna pattern, we further express the numerator in \eqref{eq_APPEN_gamma_conclusion} as $\|{\dot{\mathbf{a}}^\mathrm{p}(\theta)}^H{\mathbf{a}^\mathrm{p}(\theta)}\|^2\approx 8M^4\pi^2\sum_l \left[ \frac{\cos^2\left(\theta-\phi_l \right)}{12}+\frac{\sin^2\left(\theta-\phi_l \right)}{4}\right]=\frac{8M^4\pi^2}{3}$, which gives $\frac{\|{\dot{\mathbf{a}}^\mathrm{p}(\theta)}^H{\mathbf{a}^\mathrm{p}(\theta)}\|^2}{{\dot{\mathbf{a}}^\mathrm{p}(\theta)}^H{\dot{\mathbf{a}}^\mathrm{p}(\theta)}{\mathbf{a}^\mathrm{p}(\theta)}^H{\mathbf{a}^\mathrm{p}(\theta)}}\leq 3/8$. For the directive antenna pattern in \eqref{eq_antenna_pattern_dir}, the numerator in \eqref{eq_APPEN_gamma_conclusion} is expressed as $\|{\dot{\mathbf{a}}^\mathrm{p}(\theta)}^H{\mathbf{a}^\mathrm{p}(\theta)}\|^2 \approx \frac{3\pi^2M^4}{8}\left[\sum_l \sin\left(2\theta-\phi_l\right)\right]^2\leq \frac{3\pi^2M^4}{4}$  and the denominator in \eqref{eq_APPEN_gamma_conclusion} is expressed as ${\dot{\mathbf{a}}^\mathrm{p}(\theta)}^H{\dot{\mathbf{a}}^\mathrm{p}(\theta)}{\mathbf{a}^\mathrm{p}(\theta)}^H{\mathbf{a}^\mathrm{p}(\theta)}\approx 3M^4\pi^2\sum_l \left[2 \cos^2\left(\theta-\phi_l\right)\sin^2\left(\theta-\phi_l\right)+{\cos^2\left(\theta-\phi_l \right)}\right]\geq 3M^4\pi^2$, which gives $\frac{\|{\dot{\mathbf{a}}^\mathrm{p}(\theta)}^H{\mathbf{a}^\mathrm{p}(\theta)}\|^2}{{\dot{\mathbf{a}}^\mathrm{p}(\theta)}^H{\dot{\mathbf{a}}^\mathrm{p}(\theta)}{\mathbf{a}^\mathrm{p}(\theta)}^H{\mathbf{a}^\mathrm{p}(\theta)}}\leq 1/4$.

Note that from our simulations in Fig. \ref{fig_diff_SNR}, the fluctuations of $\Gamma_{\mathrm{I/S}}(\theta)$ are also shown to be very trivial and below 1.

\subsection{Gain of specific antenna patterns}\label{app:AP derive}
{The gain of a half-space isotropic antenna pattern (after power normalization) is typically written as  }
\begin{align}
\label{app_eq_antenna_pattern_iso3}  
G^{\mathrm{Iso}}\left(\Phi\right)=&   \begin{cases}
2,&0\leq \Phi  \leq \pi/2,\\
0,&\text{otherwise,}
\end{cases}
\end{align}
{where $0\leq \Phi\leq \pi$ is the elevation angle in a newly defined CCS,  denoted as $x_\mathrm{a}-y_\mathrm{a}-z_\mathrm{a}$, whose $x_\mathrm{a}-y_\mathrm{a}$ plane is aligned with the antenna plane.  The $\Phi$ in $x_\mathrm{a}-y_\mathrm{a}-z_\mathrm{a}$ is related to the $\theta_l$ in $x_l-y_l$  since $\Phi=\theta_l$ for $0\leq \theta_l\leq \pi$, and $\Phi=\theta_l$ for $\Phi=-\theta_l$ for $-\pi\leq \theta_l\leq 0$. Hence, $\cos(\Phi)=\cos(\theta_l)$ and thus  $0\leq \Phi\leq \pi/2$  is equivalent to $\cos(\theta_l) \geq 0$.} As a result, \eqref{app_eq_antenna_pattern_iso3} is modified as
\begin{align}
\label{app_eq_antenna_pattern_iso}  
G^{\mathrm{Iso}}\left(\theta_l\right)=&   \begin{cases}
2,& \cos(\theta_l)\geq 0,\\
0,&  \text{otherwise,}
\end{cases}
\end{align}
 {in L-CCS $x_l-y_l  $}.

A similar conversion can be made to obtain the gain of the directive antenna pattern in \eqref{eq_antenna_pattern_dir} \cite{li2023beyond}.
\eqref{eq_antenna_pattern_dir} can be explicitly expressed by
\begin{subequations}\label{eq_appendix_appr_F}
 \begin{align}  
   F\left(\theta,~l\right)=&\sqrt{ G^{\mathrm{Dir}}\left(\theta_l\right)}=
\sqrt{2\left(\alpha_{\mathrm{L}}+1\right)}\cos^{\alpha_\mathrm{L}/2}\left(\theta_l\right)\\
=&\sqrt{2\left(\alpha_{\mathrm{L}}+1\right)}\cos^{\alpha_\mathrm{L}/2}\left[\arccos\left(\frac{[\mathbf{Q}^T_l\left( \mathbf{p}_{\mathrm{T}}-\mathbf{p}_{\mathrm{c},~l}\right)]_1}{\|\mathbf{Q}^T_l\left( \mathbf{p}_{\mathrm{T}}-\mathbf{p}_{\mathrm{c},~l}\right)\|_2}\right)\right]\\
\approx &\sqrt{2\left(\alpha_{\mathrm{L}}+1\right)}   \left(\frac{[\mathbf{Q}^T_l\mathbf{p}_{\mathrm{T}}]_1}{\|\mathbf{Q}^T_l\mathbf{p}_{\mathrm{T}}\|_2}\right) ^{\alpha_\mathrm{L}/2}\\
=&\sqrt{2\left(\alpha_{\mathrm{L}}+1\right)}   \cos\left(\theta-\phi_l\right) ^{\alpha_\mathrm{L}/2},~\text{for }-\pi/2
\leq \left(\theta-\phi_l\right)\leq \pi/2,
\end{align}   
\end{subequations}
where $\alpha_\mathrm{L}=0,~1,~2$ for $L=2,~3,~4$, respectively.

\subsection{The scaling terms in Table \ref{tab2}, given $L=3,~4$ with directive  antenna patterns}\label{Appendix_e_r}
We only provide the scaling laws for $L=3,~4$ for conciseness. Other scenarios can be obtained with a straightforward extension.
\subsubsection{For $L=3$} the expected scaling law given the isotropic antenna pattern is straightforward. For the directive antenna pattern, if $\theta$ ranges from $0$ to $\pi/3$, we have  the following approximation
\begin{subequations}
 \begin{align}
    \mathbb{E}\left\lbrace r^2\left(\theta\right)\right\rbrace&\overset{N~\uparrow}{\approx}\frac{3}{\pi}\int_{0}^{\pi/3} \left\lbrace \frac{\pi^2N^3\cos\left({\theta}-\phi_3 \right)}{81} \right\rbrace d \theta\\    &\overset{N~\uparrow}{=}\frac{3}{\pi}\frac{\pi^2N^3}{81}\sin\left({\theta}\right)\Big|_{-\pi/6}^{\pi/6}\approx 0.116N^3.
\end{align}   
\end{subequations}

If $\theta$ ranges from $\pi/3$ to $2\pi/3$, we have  the following approximation
\begin{subequations}
  \begin{align}
    \mathbb{E}\left\lbrace r^2\left(\theta\right)\right\rbrace
    \overset{N~\uparrow}{\approx}&\frac{3}{2\pi}\int_{0}^{\pi/3} \left\lbrace \frac{\pi^2N^3\left[\cos\left({\theta}-\phi_3 \right)+\cos\left({\theta}-\phi_1 \right)\right]}{81} \right\rbrace d \theta\\
    =&\frac{3}{\pi}\int_{\pi/3}^{2\pi/3} \frac{\pi^2N^3 \sin\left({\theta}\right)}{81} d {\theta}\approx 0.116N^3.
\end{align}  
\end{subequations}

\subsubsection{For $L=4$} each target is illuminated by 2 aligned sectors (assumed to be sector $1$ and $4$). Given directive   antenna patterns, we have the following approximation
\begin{subequations}
   \begin{align}
    \label{eq_app_rate_L4N_scaling}
    r^2\left(\theta \right)|_{L=4}
    \overset{N~\uparrow}{\approx}&\sum_{l=1,4}\frac{N^3\pi^2{F\left(\theta,~l \right) }^2}{768}\left[2\sin^2\left(\Theta_l \right)+1\right]\\\nonumber
    \overset{(a)}{=}
&\frac{\pi^2N^3 }{128}\sin^2\left(2\theta-2\bar{\Phi}-\frac{\pi}{2} \right)
    +\frac{\pi^2N^3 }{128}\\
    =&\frac{3-\cos(4\theta)}{128} \pi^2N^3,
\end{align} 
\end{subequations}
where $\Theta_l\triangleq\theta-\phi_l$. $(a)$ defines $\bar{\Phi}\triangleq(\Phi_{1}+\Phi_{4})/2$ with  specifically $\Phi_4=-\pi/4$ and $\Phi_1=\pi/4$.

The expectation of $r^2\left(\theta \right)|_{L=4}$ over $\theta$ is then written as
\begin{subequations}
    \begin{align}
    \mathbb{E}\left\lbrace r^2\left(\theta\right)\right\rbrace\overset{N~\uparrow}{\approx}&\frac{2}{\pi}\int_{-\pi/4}^{\pi/4} \left\lbrace \frac{\cos^2\left(2\theta \right)+1}{128}\pi^2N^3\right\rbrace d \theta=0.116N^3.
\end{align}
\end{subequations}

\bibliographystyle{IEEEtran}
\bibliography{IEEEabrv, references}
 \end{document}